\journal{Journal of Parallel and Distributed Computing}
\theoremstyle{definition}
\newtheorem{defn}{Definition}[section]
\theoremstyle{plain}
\newtheorem{thm}{Theorem}[section]
\newtheorem{lem}[thm]{Lemma}
\begin{document}

\begin{frontmatter}

\title{Protocol for Asynchronous, Reliable, Secure and Efficient Consensus (PARSEC) Version 2.0}

\author[1]{Pierre Chevalier}
\author[1]{Bartłomiej Kamiński}
\author[1]{Fraser Hutchison}
\author[1]{Qi Ma}
\author[1]{Spandan Sharma}
\author[1,2]{Andreas Fackler}
\author[3]{William J Buchanan}

\address[1]{MaidSafe Ltd.}
\address[2]{POA Networks Ltd.}
\address[3]{Edinburgh Napier University}

\begin{abstract}
In this paper we present an open source, fully asynchronous, leaderless algorithm for reaching consensus in the presence of Byzantine faults in an asynchronous network. We prove the algorithm's correctness provided that less than a third of participating nodes are faulty. We also present a way of applying the algorithm to a network with dynamic membership, i.e. a network in which nodes can join and leave at will. The core  contribution of this paper is an optimal model in the definition of an asynchronous BFT protocol, and which is resilient to 1/3 byzantine nodes. This model matches an agreement with probability one (unlike some probabilistic methods), and where a common coin is used as a source of randomization so that it respects the FLP impossibility result.

\end{abstract}

\begin{keyword}
asynchronous, byzantine, consensus, distributed
\end{keyword}
\end{frontmatter}

\section{Introduction}

This paper presents a new Byzantine fault tolerant (BFT) consensus algorithm that can work under asynchronous conditions. Like Hashgraph \cite{baird2016swirlds} and Aleph \cite{gkagol2018aleph}, it has no leaders, no round robin, no proof-of-work and reaches eventual consensus with probability one. It is also fully open, and a work-in-progress implementation written in Rust is available \cite{parsec}. Like HoneyBadger BFT \cite{miller2016honey}, this algorithm is built by composing a number of good ideas present in the literature. A gossip protocol is used to allow efficient communication between nodes \cite{guerraoui2010lifting}, as in Hashgraph \cite{baird2016swirlds}, Aleph \cite{gkagol2018aleph} and Avalanche \cite{rocket2018snowflake}.

The general problem of reaching Byzantine agreement on any value is reduced to the simpler problem of reaching binary Byzantine agreement on the nodes participating in each decision. This allows us to reuse the elegant binary Byzantine agreement protocol described in \cite{aba} after adapting it to the gossip protocol.

Finally, even though a trusted dealer (or another trusted external source of private key shares) is still required to initialise the instances of the algorithm (like in \cite{miller2016honey}), changes to the set of processes executing the instances (which we call \emph{dynamic membership}) can be made without the need for such external sources. The resulting algorithm is a Protocol for Asynchronous, Reliable, Secure and Efficient Consensus. PARSEC is a key building block of the SAFE Network, an ethical decentralized network of data and applications providing Secure Access For Everyone \cite{penland2016towards}.

The key contribution of this paper is the creation of an asynchronous BFT protocol, and which is resilient to 1/3 byzantine nodes. This is an optimal model. It also satisfies an agreement with probability one (unlike some probabilistic methods), and with a common coin for source of randomization that it respects the FLP impossibility result \cite{borowsky1993generalized}\cite{fischer1982impossibility}. The algorithm uses gossip for efficient and resilient communication. It is leaderless (with a caveat at network start-up) and the paper outlines how it can be adapted it to a dynamic membership context, while remaining leaderless after start-up. In comparable work, HBBFT (Honey Badger of BFT) has a less efficient communication mechanism with a secure broadcast \cite{miller2016honey}, and HashGraph is not rigorous in a liveness proof \cite{alpern1985defining} and where the published work only discusses the possibility of using a common coin in passing \cite{baird2016swirlds}. AlephZero \cite{gkagol2018aleph} is more recent than our initial work \cite{parsec} and is an improvement on HashGraph, and which also includes the use of a common coin. Avalanche \cite{rocket2018snowflake} differs from our work in that is uses a synchronous context. None of these methods, though, includes dynamic membership, and this is a key differentiator in this paper.

\section{Related work}

A gossip protocol has been likened to office workers spreading a rumour, and where Alice starts a new rumour, and then passes it to Bob, who then passes it to Dave. Alice then tells it to Frank, and who might have already heard it from Dave. In this way the rumour propagates quickly through a network, depending on the frequency that those spreading the rumours will pass them on.  The advantages of gossip protocols was outlined by \cite{birman2007promise} and who defined that they could be used with autonomic self-management, repair of inconsistencies, reliable multicast and distributed search. In actual operation, a gossip protocol involves a group of agents who hold private information, and who can communicate with each other. The core objective is for all of the agents to learn the private information \cite{apt2017computational}, and where distributed epistemic methods lead to simplified system models. A core weakness of the gossip protocols is that the dissemination of the rumour might not be radiated across a whole network of connections. Recent applications of gossip protocols has included the verifying the consistency of certificate logs \cite{chuat2015efficient}.

Honey Badger is Byzantine Fault Tolerant \cite{miller2016honey} which is asynchronous in its scope. It does not involve a leader node beyond the trusted setup phase and can cope with corrupted nodes. It does not actually make any commitments around the timing of the delivery of a message, and where even if Eve control the scheduling of messages, there will be no impact of the overall consensus. It can reach a consensus within an infrastructure of $f$ failed nodes, and where the total number of node ($N$) is greater than $3\times f$. 

Hashgraph builds on the Directed Acyclic Graphs (DAG) approach \cite{baird2016swirlds} with gossip communications; \emph{gossip-about-gossip} methods; and Byzantine voting algorithms. The gossip-about-gossip method allows for the history of all the communications within an infrastructure to be reconstructed. Avalanche \cite{rocket2018snowflake} focuses on a scaleable electronic payment system. It uses synchronous communications with a leaderless BFT. In the face of adversaries, it uses a probabilistic safety guarantee.

\section{The algorithm description}

\subsection{The network model}

We assume the network to be a set $\mathcal{N}$ of $N$ instances of the algorithm communicating via asynchronous connections, which means that messages sent over these connections are all delivered eventually, but we make no assumptions regarding the delays between sending the message and its reception. In such a setup, it is impossible to distinguish between an instance failing by completely stopping and a large delay in message delivery.

We allow a possibility of up to $f$ Byzantine (arbitrary) failures, where $3f < N$. We will call
the instances that have not failed \emph{correct} or \emph{honest}, and the failing instances
\emph{faulty} or \emph{malicious} - as the Byzantine failure model allows for malicious behaviour
and collaboration. We will refer to any set of instances containing more than $\frac{2}{3}N$ of them as a \emph{supermajority}.

\subsection{Data structures}

A node executing the algorithm keeps two data structures: a \emph{gossip graph} and an
ordered set of \emph{blocks}. The vertices of the gossip graph, called \emph{gossip events},
contain the following fields:

\begin{itemize}
		\item Payload - data the node wants to pass to other nodes
		\item Self-parent (optional) - a cryptographic hash of another gossip event created by the
			same node
		\item Other-parent (optional) - a hash of another gossip event created by some other node
		\item Cause - cause of creation for this event; can be \emph{sync}, \emph{observation} or
			\emph{coin share}
		\item Creator ID - the public key of the event's creator
		\item Signature - a cryptographic signature of the above fields
\end{itemize}

The self-parent and other-parent are always present, except for the first events created by
respective nodes, as there are no parent events to be referred to in such cases. Other-parent is
also absent in events created because of an observation or a coin share - because there is no
gossip partner in such a case.

The blocks in the ordered set are network events signed by a subset of the nodes in the network.
This set is the output of the algorithm, and represents an order of network events that all nodes
agree upon. We call the blocks that are elements of the ordered set \emph{stable blocks}. Let us also define a few useful terms regarding the gossip graph for future use.

\begin{defn}
	We say that event $A$ is an \emph{ancestor} of event $B$ iff: $A = B$, or $A$ is an ancestor of
	$B$'s self-parent, or $A$ is an ancestor of $B$'s other-parent.
\end{defn}

\begin{defn}
	We say that event $A$ is a \emph{self-ancestor} of event $B$ iff: $A = B$, or $A$ is a
	self-ancestor of $B$'s self-parent.
\end{defn}

\begin{defn}
	We say that event $A$ is a \emph{descendant} of event $B$ iff $B$ is an ancestor of $A$.
\end{defn}

\begin{defn}
	We say that event $A$ is a \emph{self-descendant} of event $B$ iff $B$ is a self-ancestor of
	$A$.
\end{defn}

Following Hashgraph\cite{baird2016swirlds}, we also define two additional useful notions:

\begin{defn}
	An event $A$ is said to \emph{see} an event $B$ iff $B$ is an ancestor of $A$, and there
	doesn't exist any pair of events by $B$'s creator $B_1$, $B_2$, such that $B_1$ and $B_2$ are
	ancestors of $A$, but $B_1$ is neither an ancestor nor a descendant of $B_2$ (see fig.
	\ref{fig-seen}). We call a situation in which such a pair exists a \emph{fork}.
\end{defn}

\begin{defn}
	An event $A$ is said to \emph{strongly see} an event $B$ iff $A$ sees a set of events created
	by a supermajority of nodes in the system that all see $B$ (see fig. \ref{fig-stronglyseen}).
\end{defn}

\begin{figure}[!ht]
	\centering
	\begin{floatrow}
		\ffigbox[\FBwidth]{%
			\caption{d\_4 sees b\_0: b\_0 is its ancestor and there are no forks}
			\label{fig-seen}}{%
			\includegraphics[width=.4\textwidth]{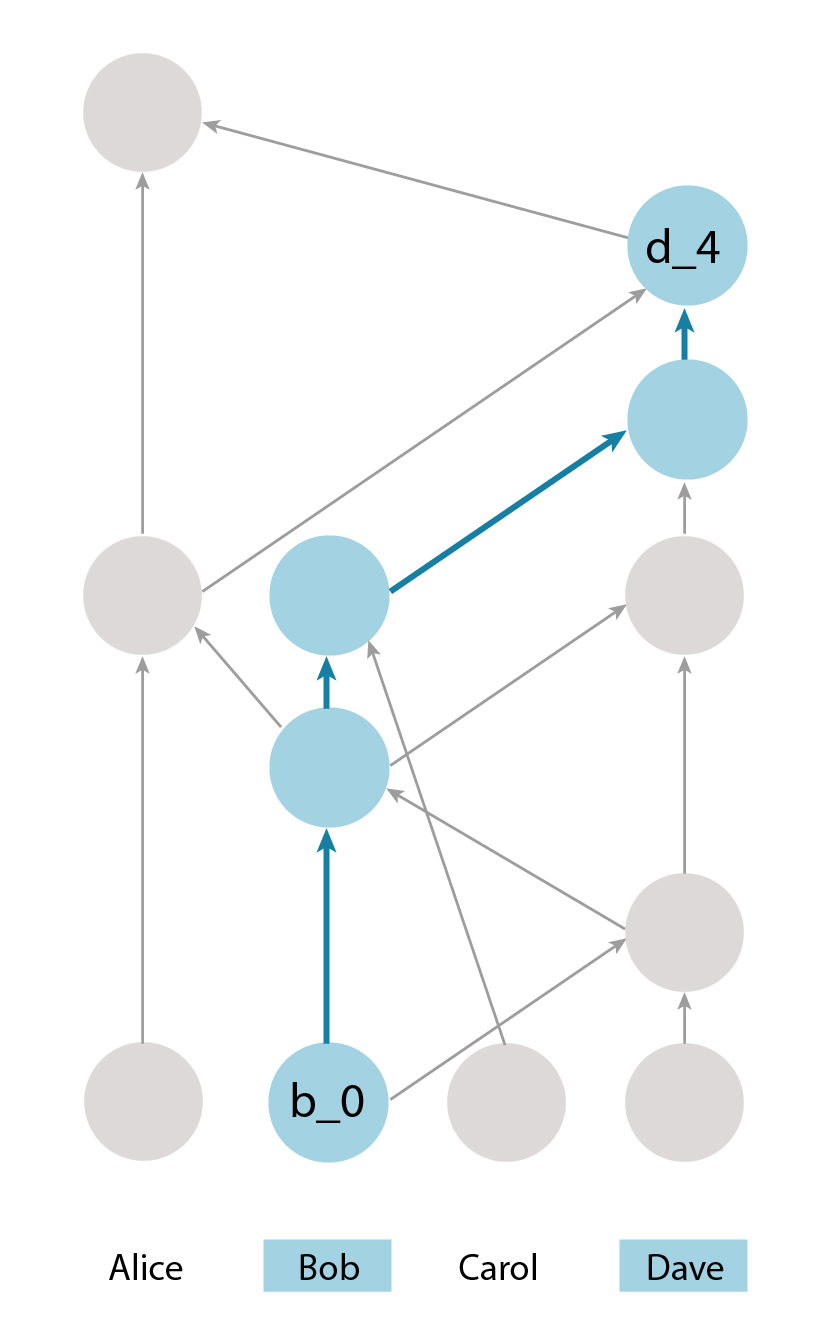}
		}
		\ffigbox[\FBwidth]{%
			\caption{a\_1 strongly sees b\_0: it sees itself, b\_1 and d\_1, which have been
			created by a supermajority of nodes and all see b\_0}
			\label{fig-stronglyseen}}{%
			\includegraphics[width=.4\textwidth]{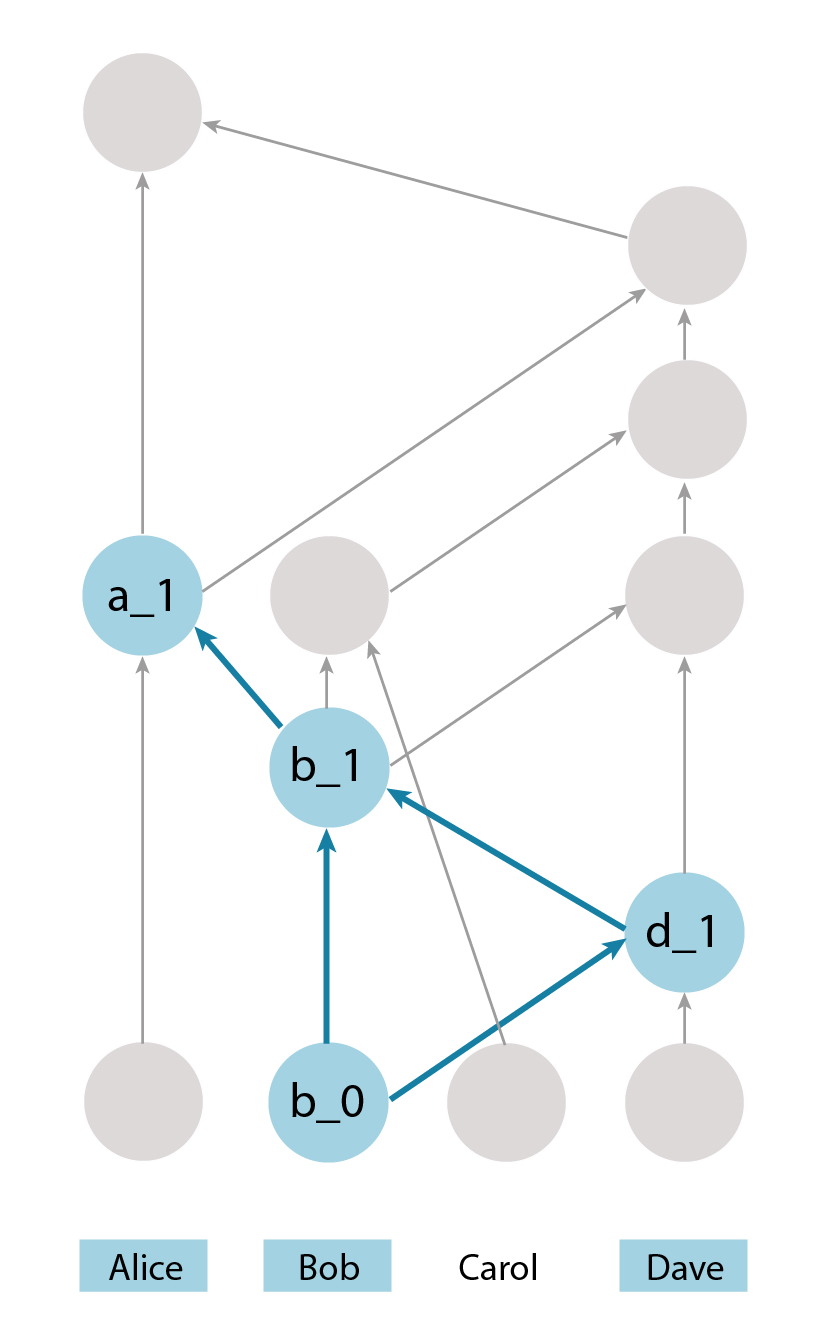}
		}
	\end{floatrow}
\end{figure}

\subsection{General overview of the algorithm}

The nodes execute two main steps in an infinite loop:

\begin{itemize}
		\item Synchronise the gossip graph with another random node
		\item Determine whether any new blocks should be appended to the ordered set
\end{itemize}

\subsubsection{Synchronisation}

This step is responsible for building the gossip graph and spreading information around the
network. Nodes continually make random calls, called \emph{sync calls}, to other nodes and exchange
information about the graph, so that all correct nodes end up with the same data in their graphs.
The hashes and signatures in gossip events make sure that malicious nodes won't be able to tamper
with any part of the graph.

Whenever a node receives a sync call, it creates a new gossip event. The self-parent of this event
is the hash of the last gossip event created by the recipient, and the other-parent is the hash of
the last event created by the sender (which the recipient learns about from the call). The new event
also stores the reason for which it was created (cause: \emph{sync}).

If the recipient of a sync call believes it knows a network event that should be appended as the
next one in the chain, it records its vote as the payload of the newly created event. The other
nodes will learn of this vote during subsequent sync calls made by its creator.

\begin{lem}\label{descendants}
	If A and B are correct nodes, then every event created by A will eventually have a descendant
	created by B.
\end{lem}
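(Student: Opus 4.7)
The plan is to exploit the fact that correct nodes execute the synchronisation loop forever and, crucially, that correct nodes do not fork their own history. Fix an event $e$ created by $A$. The goal is to produce an event created by $B$ that has $e$ as an ancestor. I would first argue that because $A$ continues to run the main loop after creating $e$, and sync targets are chosen at random among all nodes, with probability one $A$ eventually initiates a sync call to $B$ at some time $t$ after $e$ is created. (If one prefers a purely non-probabilistic formulation, this step can be replaced by a standard fairness assumption on the sync schedule, or by iterating the same argument over a chain of intermediate correct nodes, which cannot be ignored forever.)

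Next I would look at the moment this call is serviced. Since the channel from $A$ to $B$ is asynchronous but reliable, $B$ eventually receives the message from $A$. By the protocol description of the synchronisation step, upon receiving the sync call $B$ creates a new gossip event $e'$ whose other-parent is the hash of the latest event of $A$ known at the time of the call. Call that event $e_A$.

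Now I would invoke the honesty of $A$: because $A$ is correct, it never produces a fork, so its events created after $e$ form a linear chain of self-descendants of $e$. In particular $e_A$ is a self-descendant of $e$, hence an ancestor-descendant relation $e \preceq e_A$ holds. Since the other-parent of $e'$ is $e_A$, the definition of ancestor immediately gives that $e$ is an ancestor of $e'$, i.e.\ $e'$ is a descendant of $e$ created by $B$, which is exactly what the lemma asserts.

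The main obstacle I anticipate is making the word \emph{eventually} rigorous in the asynchronous probabilistic setting: one has to justify that $A$'s random choice of sync partners really does reach $B$, rather than merely circling among a strict subset of correct nodes. The cleanest resolution is to assume that each correct node is chosen with positive probability at each step and to apply a Borel--Cantelli argument; a slightly more pedestrian alternative is to establish by induction on the distance in the communication graph that information about $e$ propagates outward through correct nodes, using the same single-step argument repeatedly until $B$ is reached.
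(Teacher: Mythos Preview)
Your proof is correct and follows the same idea as the paper's, which dispatches the lemma in one line by appealing to the network assumption that every message is eventually delivered together with the fact that nodes continue to make sync calls, so the callee creates a descendant of the caller's last event. Your version is considerably more detailed---spelling out the no-forking argument and the probabilistic justification for eventual direct contact---but the underlying mechanism is identical.
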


\begin{proof}
	This trivially follows from the network assumption that every message is eventually delivered, and
	from the fact that nodes continue to make sync calls, which result in the callee creating a
	descendant of the caller's last event.
\end{proof}

\subsubsection{Determining order}

During this step, a node analyses the graph, counts the votes and decides which block should become the next one. This step is a complex one and so it is described in detail in a separate subsection below.

\subsection{Calculating the order}

To be able to order blocks, we need first to have some blocks that can be ordered.

Every event has a set of \emph{interesting payloads} associated with it, which are some of the
network events its ancestors contain votes for. The exact way the set of interesting payloads is
calculated is left to the user; however, it has to satisfy some constraints:

\begin{itemize}
	\item If event $e$ has an interesting payload $p$, there exists an ancestor of $e$ containing a vote for $p$.
	\item If $e$ has an interesting payload $p$, no self-descendants of $e$ have an interesting
		payload $p$.
	\item If event $a$ has an interesting payload $p$, $a$ is an ancestor of event $e$ and $e$ has
		no self-ancestor having an interesting payload $p$, then $e$ has an interesting payload $p$.
	\item If $e$ has an interesting payload $p$, then no stable block contains $p$.
\end{itemize}

For example, $p$ might become an interesting payload of $e$ if $e$ has a single ancestor that
contains a vote for $p$. Another option is that $p$ only becomes an interesting payload if there is a supermajority of ancestors of $e$ containing votes for $p$. An event that has a non-empty set of interesting payloads is called an \emph{interesting event}.

From the first constraint on interesting payloads it follows that an interesting event always has a self-parent. Only the initial events have no self-parents, but they are their only ancestors, and they never contain votes, so they can't be interesting.

The first gossip event created by any given node which strongly sees interesting events created by a supermajority of nodes is said to be an \emph{observer}. The interesting events don't need to have the same interesting payloads - in fact, it is the case when they have different payloads that is the most interesting.

Since observers are descendants of interesting events, and interesting events always have
self-parents, it follows that observers always have self-parents as well. An observer implicitly carries a list of $N$ \emph{meta-votes}. Every meta-vote is just a binary value denoting whether a corresponding node's interesting event is to be taken into account when determining the order. An observer casts a meta-vote of $true$ on a node if it can strongly see an interesting event by that node. Each node casts a meta-vote on every node, hence each node casts $N$
meta-votes, and since an observer strongly sees a supermajority of interesting events, by
definition, more than $\frac{2}{3}N$ of them are $true$.

Meta-votes reduce the problem of Byzantine agreement about the order to that of binary Byzantine
agreement, which has been solved previously\cite{aba}. The algorithm described in \cite{aba}, like many ABFT algorithms, requires a device called a common coin. PARSEC is no different in this regard. In PARSEC, we utilise a common coin based on a threshold cryptography scheme using Boneh-Lynn-Shacham signatures \cite{bls}.

\subsubsection{Binary agreement}

For the sake of simplicity, we will define the algorithm in terms of deciding a single
\emph{meta-election} - that is, deciding whether or not to take a single node's opinion into
account when trying to choose a single new block. We can view a meta-election for node $X$ with
latest agreed block $B$ as a function on a subset $H_{X,B}$ of the gossip graph $G$, which is the
set of all events that are descendants of any observer of this meta-election:

\[ \mathsf{meta\_election}_{X,B}: H_{X,B} \to \{0, 1, \bot\} \]

The $\bot$ value means that the result has not been decided yet at this point in the graph.

Any gossip event which is an element of $H_{X, B}$ and is not an observer trivially has a
self-parent in $H_{X, B}$.

From this point on, until section \ref{nextblock}, whenever we mention a meta-election, we mean a
single meta-election regarding a single node, with a specific block $B$ being the last stable one.

In order to calculate the meta-election value for events in $H_{X,B}$, we will need to calculate a
few helper values as well:

\begin{itemize}
	\item $\mathsf{stage}$ - a counter denoting the calculation stage
	\item $\mathsf{estimates}$ - a set of one or two values estimating the final result
	\item $\mathsf{bin\_values}$ - a helper set of binary values
	\item $\mathsf{aux}$ - a helper binary value
\end{itemize}

$\mathsf{stage}$ is an integer value which represents the stage of the protocol we are considering
when looking at a specific gossip event. A number is associated with each gossip event, such that
the $\mathsf{stage}$ of the observers is always 0. The $\mathsf{stage}$ of any other gossip event
is either the $\mathsf{stage}$ of its self-parent, or the stage of its self-parent plus one under
specific conditions. The exact conditions under which the stage is incremented will be described
later in more details. Other variables such as $\mathsf{estimates}$, $\mathsf{bin\_values}$ and
$\mathsf{aux}$ all depend on the stage.

$\mathsf{estimates}$ (abbreviated $\mathsf{est}$) is a set of binary values that represent the
perceived opinion(s) of the creator of any gossip event on the outcome of a meta-election. The
initial $\mathsf{estimates}$ of an observer is the set containing just its own meta-vote, and it is
the set of the self-parent's $\mathsf{estimates}$ for other events, except the events which increase
the stage - it is then calculated from the results of the previous stage.

If the initial estimates for an event contain a single value $v$, and that event sees more than
$\frac{N}{3}$ events with $\neg v$ in their $\mathsf{estimates}$ (which means that at least one
honest node estimated $\neg v$), this opposite value gets added to its $\mathsf{estimates}$ (so it
will contain both true and false).

Note: the convention in function definitions below is that the value of the function is the first
value for which the corresponding condition is satisfied. We also use the common convention of $0$
denoting \emph{false}, and $1$ denoting \emph{true}.

\[ \mathsf{init\_est}: H_{X,B} \to 2^{\{0,1\}} \]
\[ \mathsf{init\_est}(e) = \left\{ \begin{array}{ll}
	\{ v \} & \textrm{if $e$ is an observer} \\
	& \textrm{with meta-vote $v$} \\
	\mathsf{next\_est}(\mathsf{self\_par}(e)) & \textrm{if $\mathsf{stage}(e) >
		\mathsf{stage}(\mathsf{self\_par}(e))$} \\
	\mathsf{est}(\mathsf{self\_par}(e)) & \textrm{otherwise}
\end{array} \right. \]

\[ \mathsf{est}: H_{X,B} \to 2^{\{0,1\}} \]
\[ \mathsf{est}(e) = \left\{ \begin{array}{ll}
	\{ v \} & \textrm{if there exists an ancestor $d$ of $e$} \\
	& \textrm{such that $v = \mathsf{meta\_election}(d) \neq \bot$} \\
	\{ 0,1 \} & \textrm{if $\mathsf{init\_est}(e) = \{v\}$} \\
	& \textrm{and $e$ sees $\geq\frac{N}{3}$ events $x$}\\
	& \textrm{by different nodes such that} \\
	& \textrm{$\mathsf{stage}(x) = \mathsf{stage}(e)$ and $\neg v \in \mathsf{est}(x)$} \\
	\mathsf{init\_est}(e) & \textrm{otherwise}
\end{array} \right. \]

$\mathsf{self\_par}(e)$ denotes $e$'s self-parent, and $\mathsf{next\_est}$ and $\mathsf{stage}$
will be defined later, once we have defined more values related to the events.

Once an event can see a supermajority of events by different nodes which agree in their estimates,
this agreed estimate becomes an element of this event's $\mathsf{bin\_values}$ (abbreviated
$\mathsf{bv}$). This set serves to validate values proposed by other nodes - if they propose
something we don't have in $\mathsf{bin\_values}$, we will reject it, as we have no way to ensure
its validity.

\[ \mathsf{bv}: H_{X,B} \to 2^{\{0,1\}} \]
\[ \mathsf{bv}(e)  =  \{ v: \begin{array}[t]{l} \textrm{there exist $> \frac{2}{3}N$ events $x$} \\
	\textrm{by different nodes such that} \\
	\textrm{$e$ sees $x$ and $\mathsf{stage}(e) = \mathsf{stage}(x)$ and $v \in \mathsf{est}(x)$} \\
	\textrm{or there exists an ancestor $d$ of $e$} \\
	\textrm{such that $\mathsf{meta\_election}(d) = v \neq \bot$}\}
\end{array}\]

\begin{figure}[!ht]
	\centering
	\includegraphics[width=\textwidth]{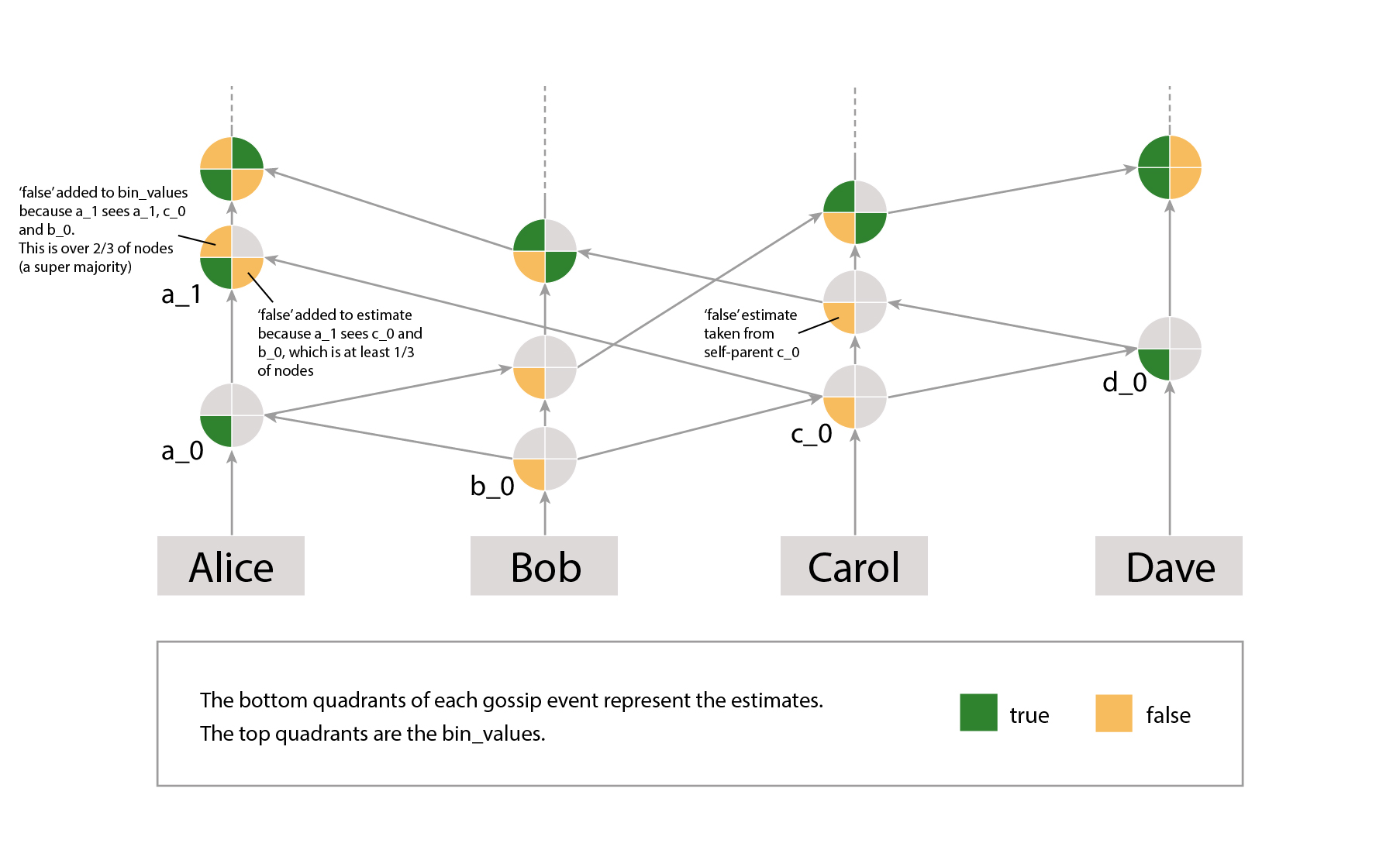}
	\caption{An example gossip graph, along with estimates and bin\_values
	associated with each gossip event. It illustrates how different nodes process the information
	they receive in order to populate their bin\_values.}
\end{figure}

If an event's parent has an empty $\mathsf{aux}$ value, and the event itself has non-empty
$\mathsf{bin\_values}$, it can propose a value to be agreed. This proposing is realised by having a
non-empty $\mathsf{aux}$ value. If $\mathsf{bin\_values}$ contains just one value, this value
becomes the $\mathsf{aux}$ value; otherwise, we can pick an arbitrary value, so we will pick true.
If the parent's value isn't empty, it becomes our value as well.

\[ \mathsf{aux}: H_{X,B} \to \{0, 1, \bot\} \]
\[ \mathsf{aux}(e) = \left\{ \begin{array}{ll}
	v & \textrm{if there exists an ancestor $d$ of $e$} \\
	& \textrm{such that $v = \mathsf{meta\_election}(d) \neq \bot$} \\
	\bot & \textrm{if $\mathsf{bv}(e) = \varnothing$} \\
	w & \textrm{if $\mathsf{bv}(e) = \{ w \}$} \\
	& \textrm{and $\mathsf{aux}(\mathsf{self\_par}(e)) = \bot$ } \\
	1 & \textrm{if $\mathsf{bv}(e) = \{ 0,1 \}$} \\
	& \textrm{and $\mathsf{aux}(\mathsf{self\_par}(e)) = \bot$ } \\
	\mathsf{aux}(\mathsf{self\_par}(e)) & \textrm{if $\mathsf{aux}(\mathsf{self\_par}(e))
		\neq \bot$}
\end{array}\right. \]

Whenever an event sees a supermajority of events with valid $\mathsf{aux}$ values, we perform the
common coin protocol (described in section \ref{commoncoin}), which will lead either to deciding
the final agreed value, or updating the estimates and moving to the next stage.

First, let us define some helper functions:

\[ \mathsf{supermajority\_valid\_aux}: H_{X,B} \to \{0,1\} \]
\[ \mathsf{supermajority\_valid\_aux}(e) = \begin{array}[t]{l}
	\textrm{$e$ sees a supermajority} \\
	\textrm{of events $x$ by different nodes} \\
	\textrm{such that $\mathsf{stage}(x) = \mathsf{stage}(e)$} \\
	\textrm{and $\mathsf{aux}(x) \in \mathsf{bv}(e)$}
\end{array} \]

\[ \mathsf{count\_aux}: H_{X,B} \times \{0, 1\} \to \mathbb{N} \]
\[ \mathsf{count\_aux}(e, v) = \begin{array}[t]{l}
	\textrm{number of events $x$ by different nodes such that} \\
	\textrm{$e$ sees $x$ and $\mathsf{stage}(x) = \mathsf{stage}(e)$} \\
	\textrm{and $\mathsf{aux}(x) \in \mathsf{bv}(e)$ and $\mathsf{aux}(x) = v$}
\end{array} \]

Now we can define how to determine a decided value:

\[ \mathsf{meta\_election}: H_{X,B} \to \{0, 1, \bot\} \]
\[ \mathsf{meta\_election}(e) = \left\{ \begin{array}{ll}
	v & \textrm{if there exists an ancestor $d$ of $e$} \\
	& \textrm{such that $v = \mathsf{meta\_election}(d) \neq \bot$} \\
	1 & \textrm{if $\mathsf{coin\_flip}(e) = 1$} \\
	& \textrm{and $\mathsf{count\_aux}(e, 1) > \frac{2}{3}N$} \\
	0 & \textrm{if $\mathsf{coin\_flip}(e) = 0$} \\
	& \textrm{and $\mathsf{count\_aux}(e, 0) > \frac{2}{3}N$} \\
	\bot & \textrm{otherwise}
\end{array} \right. \]

$\mathsf{coin\_flip}$ is the value of the common coin flip and will be defined later, in section
\ref{commoncoin}.

If an event sees a supermajority of valid $\mathsf{aux}$ values, but isn't able to decide, the next
event will mark the beginning of the next stage of the algorithm. This lets us finally define
$\mathsf{stage}$:

\[ \mathsf{stage}: H_{X,B} \to \mathbb{N} \]
\[ \mathsf{stage}(e) = \left\{ \begin{array}{ll}
	0 & \textrm{if $e$ is an observer} \\
	\mathsf{next\_stage}(\mathsf{self\_par}(e)) & \textrm{otherwise}
\end{array}\right. \]
\[ \mathsf{next\_stage}: H_{X,B} \to \mathbb{N} \]
\[ \mathsf{next\_stage}(e) = \mathsf{stage}(e) + \left\{ \begin{array}{ll}
	1 & \textrm{if $\mathsf{supermajority\_valid\_aux}(e)$} \\
	& \textrm{and $\mathsf{next\_est}(e) \neq \bot$} \\
	0 & \textrm{otherwise}
\end{array}\right.\]

If we don't decide in a stage, we need new estimates for the next one. This is being taken care of
by the common coin protocol briefly mentioned before, and described in more detail in section
\ref{commoncoin}.

In every stage, when the coin value is known, we can either decide or calculate the initial estimate
for the next stage. The general rule is this: we decide $v$ if we see a supermajority of
$\mathsf{aux}$ values of $v$ and the coin value is $v$. If we see a supermajority of $\mathsf{aux}$
values of $\neg v$ and the coin value is $v$, we estimate $\neg v$ in the next stage. If we don't
see any supermajority, we estimate the coin value in the next stage.

To calculate new estimates, we will define a $\mathsf{next\_est}$ function (which appeared already
in the definition of $\mathsf{est}$):

\[ \mathsf{next\_est}: H_{X,B} \to 2^{\{0,1\}} \cup \{ \bot \} \]
\[ \mathsf{next\_est}(e) = \left\{ \begin{array}{ll}
	\{v\} & \textrm{if $\mathsf{count\_aux}(e, v) > \frac{2}{3}N$} \\
	& \textrm{and $\mathsf{coin\_flip}(e) \neq \bot$} \\
	\{\mathsf{coin\_flip}(e)\} & \textrm{if $\mathsf{count\_aux}(e, 0) \leq \frac{2}{3}N$} \\
	& \textrm{and $\mathsf{count\_aux}(e, 1) \leq \frac{2}{3}N$} \\
	& \textrm{and $\mathsf{coin\_flip}(e) \neq \bot$} \\
	\bot & \textrm{otherwise}
\end{array} \right. \]

We can now start defining the $\mathsf{coin\_flip}$ function.

\subsubsection{Common coin}\label{commoncoin}

The common coin protocol is used to calculate the value of the coin flip in a stage. Every stage can
either have a predefined coin value, or demand a genuine flip. The exact pattern of which stage has
which option associated with it can be defined by the user, as long as there will be infinitely
many genuine flips with the stage number tending towards infinity (otherwise the termination
property of the algorithm - explained later in the paper - will not hold).

The simplest pattern would be to require a genuine flip at every stage. But other patterns
\cite{miller}, e.g. 1, 0, flip, 1, 0, flip\cite{trivial}, etc. can be used to avoid some of the
expensive flips, and optimise more for the optimistic case. Using fixed values at some stages may
speed up reaching consensus in some cases, as it doesn't require the exchange of coin shares and
returns the coin value right away.

For the genuine flip, the nodes do need to exchange coin shares. In order to define what they are,
let us first define the \emph{round hash} as follows. This hash will help us uniquely identify one
particular stage of a specific meta-election for a given node's meta vote:

\[ \mathsf{round\_hash}: H_{X,B} \to [0, 2^{256}) \]
\[ \mathsf{round\_hash}(e) = \mathsf{hash}( \mathsf{hash}(X), \mathsf{hash}(payload(B)),
	\mathsf{hash}(\mathsf{stage}(e))) \]

All nodes are assumed to possess private key shares - parts of a Boneh-Lynn-Shacham private key in a
threshold scheme, in which at least $\frac{N}{3}$ signature shares (signatures generated with
private key shares) are needed to reconstruct a full signature, and any $\frac{N}{3}$ signature
shares will result in the same, bit-by-bit identical signature. All nodes are also in possession of
public keys corresponding to all other nodes, so that they can verify each signature share
independently, as well as the full signature.

When a node creates an event $e$ with $\mathsf{stage}(e)$ corresponding to a genuine flip, that sees
a supermajority of aux values in its stage, it signs $\mathsf{round\_hash}(e)$ with its private key
share and publishes the resulting signature share in a gossip event with cause \emph{coin share}.
Once an event sees enough events with valid coin shares, it can collect them and calculate the full
signature, of which the lowest order bit will be taken as the coin flip value.

Let us define $\mathsf{count\_shares}$ analogously to $\mathsf{count\_aux}$:

\[ \mathsf{count\_shares}: H_{X,B} \times \{0, 1\} \to \mathbb{N} \]
\[ \mathsf{count\_shares}(e) = \begin{array}[t]{l}
	\textrm{number of events $x$ by different nodes such that} \\
	\textrm{$e$ sees $x$ and $\mathsf{stage}(x) = \mathsf{stage}(e)$} \\
	\textrm{and $x$ contains a valid coin share for $\mathsf{stage}(e)$} \\
\end{array} \]

The genuine flip can then be defined as:

\[ \mathsf{genuine\_flip}: H_{X,B} \to \{0,1,\bot\} \]
\[ \mathsf{genuine\_flip}(e) = \left\{ \begin{array}{ll}
	\textrm{lowest order bit} & \\
	\textrm{of the full signature} & \textrm{if $\mathsf{count\_shares}(e) \geq \frac{N}{3}$} \\
	\bot & \textrm{otherwise}
\end{array}\right. \]

Let us denote the set of stages with coin value fixed to 1 as $C_1$, the set of stages with coin
value fixed to 0 as $C_0$, and the set of genuine flip stages as $C_f$. The sets satisfy $C_1 \cup
C_0 \cup C_f = \mathbb{N}$, $C_1 \cap C_0 = C_1 \cap C_f = C_0 \cap C_f = \varnothing$, and
$|C_f| = |\mathbb{N}|$. Then, the full coin flip will be defined as follows:

\[ \mathsf{coin\_flip}: H_{X,B} \to \{0,1,\bot\} \]
\[ \mathsf{coin\_flip}(e) = \left\{ \begin{array}{ll}
	1 & \textrm{if $\mathsf{stage}(e) \in C_1$} \\
	0 & \textrm{if $\mathsf{stage}(e) \in C_0$} \\
	\mathsf{genuine\_flip}(e) & \textrm{if $\mathsf{stage}(e) \in C_f$}
\end{array}\right. \]

This is all we need to reach consensus on the meta-votes.

\subsubsection{Agreement about the next block}\label{nextblock}

Using the above, every node can calculate the results of all meta-elections. Once the results are
known, they can be used to determine the next block in the ordered set.

Let us remember that the meta-elections started with a set of observers - a set of events that all
strongly see a supermajority of interesting events. The results of the meta-elections tell us which
interesting events are to be taken into account.

The properties of meta-elections ensure that all nodes will agree on the considered set of nodes.
What we need to do is change that into an agreement on what the next block should be. This is
pretty trivial, although we must consider two issues: every node could create multiple interesting
events, and every interesting event could contain multiple interesting payloads.

To counter the first issue, we can just take the earliest interesting event created by a given node.
The events created by a single node form a linear sequence, so the earliest one is well-defined.
This narrows the considered set down to a single interesting event per node.

The next step is to choose a valid block among potentially multiple ones seen by the chosen
interesting event. To do that, we can take the lexicographically first one, or use really any method
that will always choose the same element of a set.

Once we have one vote on a block per node, we just count them and the next agreed block will be the
one with the most votes. Any ties can be broken again by lexicographic ordering, or some other
method.

It is also possible to repeat the steps above for other interesting payloads of the interesting
events that have been selected by the meta-election as an optimisation, so that a single
meta-election results in appending multiple stable blocks.

This completes the description of the algorithm. The next section will prove that it is correct,
that is, that it provides robust consensus in an asynchronous setting, and in the presence of
Byzantine faults.

\section{Proof of correctness}

Let us begin by stating two important properties of the gossip graph.

\begin{defn}
	We call two gossip graphs \emph{consistent} iff for every gossip event $x$ that is present in
	both graphs, both contain the same set of ancestors of $x$ with the same sets of edges between
	them.
\end{defn}

\begin{lem}\label{consistency}
	All nodes in the network have consistent gossip graphs.
\end{lem}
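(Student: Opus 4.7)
The plan is to exploit the cryptographic structure of gossip events to show that the ancestry of any event is intrinsically determined by the event itself. Each event is signed over its fields, which include the hashes of its self-parent and other-parent. By the unforgeability of the signature scheme and the collision resistance of the hash function, the hash of an event uniquely identifies it, and the hashes recorded inside an event uniquely identify its parents. Hence, if two nodes hold a common event $x$, they implicitly agree on the hashes of $x$'s immediate parents.

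Before running the main induction, I would establish that a correct node's gossip graph is \emph{ancestor-closed}: if $x$ is in the graph, then so is every ancestor of $x$. This follows from the description of the synchronisation step, where a sync call transfers to the recipient whatever ancestors it is missing, since otherwise the recipient could neither verify $x$'s signature chain nor meaningfully use $x$ as an other-parent when creating its own next event.

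The core of the argument is then an induction on the depth of $x$ in the ancestor relation. The base case handles initial events, which have no parents and are trivially consistent across any two graphs that both contain them (the signed body pins down every field). For the inductive step, given $x$ in both graphs, its two parent hashes are fixed by its signed body; by ancestor-closedness those parents lie in both graphs; by collision resistance they are the very same events in both. The inductive hypothesis applied to each parent then yields that their ancestor sets agree across the two graphs, and the parent-pointer edges, being recorded as hashes inside the child events, are determined pointwise, so the edge sets also match.

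The main obstacle I foresee is not technical but rather a careful handling of Byzantine behaviour, specifically forks. A malicious creator may publish two distinct events $x_1, x_2$ sharing a self-parent, and different honest nodes may learn of different subsets of $\{x_1, x_2\}$. This does not break consistency: $x_1$ and $x_2$ have different hashes, hence are different vertices of the graph, so the definition of consistency only constrains those events that appear in both graphs. I would spell this out explicitly to forestall the objection that forks might cause inconsistency, and also note that the lemma makes no claim that the two graphs contain the same events — only that they agree on the ancestry of whatever events they share.
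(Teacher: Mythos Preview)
Your argument is sound, but there is nothing in the paper to compare it against: the authors explicitly decline to prove this lemma and instead cite Baird's Hashgraph paper \cite{baird2016swirlds}, where it appears as Lemma~5.11. What you have sketched---collision-resistant hashes and unforgeable signatures pinning down each event's parents, ancestor-closedness of each honest node's graph from the mechanics of sync, and a straightforward induction on ancestor depth---is essentially the standard Hashgraph argument, so your proposal is not so much a different route as a faithful reconstruction of the proof the paper outsources. Your remark on forks is a useful clarification and does not conflict with anything in either source.
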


\begin{lem}\label{stronglysee}
	If a pair of gossip events $(x,y)$ is a fork, and another gossip event $z$ strongly sees $x$,
	then no other gossip event in a consistent graph can strongly see $y$.
\end{lem}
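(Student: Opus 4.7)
The plan is a proof by contradiction using a standard quorum-intersection counting argument, followed by exploiting the fact that honest nodes never fork. Suppose, in addition to $z$ strongly seeing $x$, that some event $z'$ strongly sees $y$. Unpacking the definition of strong seeing, $z$ has among its ancestors a set $S_x$ of events by more than $\tfrac{2}{3}N$ distinct nodes, each of which sees $x$; and $z'$ has an analogous set $S_y$ of events by more than $\tfrac{2}{3}N$ distinct nodes, each seeing $y$. Lemma \ref{consistency} ensures that these ancestor subgraphs are pieces of one common coherent structure, so the ``sees'' relation is unambiguous across both.

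Next comes the counting step: the sets of \emph{creators} appearing in $S_x$ and $S_y$ are both subsets of $\mathcal{N}$ of size $>\tfrac{2}{3}N$, so their intersection has size $>\tfrac{1}{3}N>f$. Therefore the intersection contains at least one honest node $H$, yielding two events $e_x\in S_x$ and $e_y\in S_y$ both created by $H$, with $e_x$ seeing $x$ and $e_y$ seeing $y$.

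The finishing step exploits honesty: because $H$ follows the protocol it never creates a fork, so its own events form a linear chain under the self-ancestor relation. Without loss of generality $e_x$ is a self-ancestor of $e_y$; then $x$, being an ancestor of $e_x$, is also an ancestor of $e_y$. But $y$ is likewise an ancestor of $e_y$ (because $e_y$ sees $y$), and $x$, $y$ are created by the same node with neither an ancestor of the other, which is precisely what it means for $(x,y)$ to be a fork. This contradicts the definition of $e_y$ seeing $y$, giving the desired impossibility.

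I do not expect a deep obstacle here. The one place that needs care is applying the counting argument to \emph{creators} rather than to events (a malicious creator could well contribute distinct events to both $S_x$ and $S_y$, so the disjointness we need is only among the honest overlap), and justifying that honest $H$ cannot itself be a forker, which is immediate from the assumption that honest instances follow the protocol and thus extend only one self-chain. Invoking Lemma \ref{consistency} is what legitimises reasoning about both ancestor subgraphs simultaneously inside a single common graph.
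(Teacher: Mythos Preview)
Your argument is correct and is precisely the standard quorum-intersection proof. Note, however, that the paper does not give its own proof of this lemma: it explicitly defers to \cite{baird2016swirlds}, where the result appears as Lemma~5.12. Your write-up is essentially that argument (two supermajorities of creators must overlap in more than $N/3>f$ nodes, hence in an honest node, whose linear self-chain then forces a single event to have both sides of the fork as ancestors, contradicting ``sees''), so there is nothing to contrast methodologically; you have simply supplied the omitted details.
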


We won't prove the above lemmas - they have been proved in \cite{baird2016swirlds} (as Lemma 5.11 and 5.12,
respectively). Note that lemma \ref{stronglysee} only holds if $N > 3f$, but we assume that anyway.

Let us now prove some properties of our approach stemming from it being an adaptation of
\cite{aba}.

\begin{lem}[Interesting events]\label{interesting}
	If a correct node creates an interesting event with payload $p$, then all correct nodes will
	eventually create an interesting event with payload $p$.
\end{lem}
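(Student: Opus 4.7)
The plan is to combine Lemma~\ref{descendants} with constraint (3) in the definition of interesting payloads, which says that an interesting payload of an ancestor is inherited as an interesting payload unless it has already been discarded along the creator's self-chain. Suppose a correct node $A$ creates an interesting event $e_A$ whose set of interesting payloads contains $p$. Let $B$ be any other correct node. By Lemma~\ref{descendants}, $e_A$ eventually has a descendant $e_B$ created by $B$, and I will show that either $e_B$ or one of its self-ancestors is an interesting event by $B$ containing $p$.

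I would then do a simple case split on the self-chain of $B$ up to $e_B$. If some self-ancestor of $e_B$ already has $p$ among its interesting payloads, then that self-ancestor is itself an interesting event created by $B$ with payload $p$ and we are done. Otherwise, $e_B$ satisfies the hypotheses of constraint (3) with $a = e_A$: $a$ is an ancestor of $e_B$ carrying interesting payload $p$, and no self-ancestor of $e_B$ has $p$ as an interesting payload. The constraint then forces $p$ to be an interesting payload of $e_B$ itself, so $e_B$ is the desired interesting event by $B$. Constraints (1) and (2) are not needed for the argument, but they are consistent with it: (1) guarantees that a vote for $p$ exists among $e_A$'s ancestors (and hence among $e_B$'s ancestors, since $e_A$ is an ancestor of $e_B$), and (2) ensures that "the first interesting event by $B$ with payload $p$" is a well-defined notion along $B$'s self-chain.

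The main subtlety I anticipate is constraint (4), which forbids $p$ from being an interesting payload of any event once a stable block containing $p$ exists. In principle one might worry that $p$ becomes stable for $B$ between the creation of $e_A$ and the creation of $e_B$, which would appear to block the application of constraint (3). I expect this to be handled by observing that stability of $p$ is itself the result of a meta-election outcome driven by interesting events for $p$, so by the time $B$ is entitled to treat $p$ as stable, $B$ must already have produced an interesting event with payload $p$ along the way; in that branch the first case of the dichotomy triggers and the conclusion still holds. With this observation the two cases exhaust the possibilities and the lemma follows.
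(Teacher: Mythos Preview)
Your argument is essentially the paper's proof spelled out in more detail: the paper also invokes Lemma~\ref{descendants} to obtain a descendant of $e_A$ created by $B$ and then appeals to ``the properties of interesting payloads'' to conclude that this descendant or one of its self-ancestors carries $p$, which is exactly your case split via constraint~(3). Your extra paragraph on constraint~(4) raises a point the paper's proof simply does not address; the resolution you sketch is plausible in spirit but not fully justified (stability of $p$ does not obviously force $B$ itself to have created an interesting event for $p$), so at that level of rigor you are neither better nor worse off than the original.
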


\begin{proof}
	Let $e$ be the event created by a correct node that has interesting payload $p$. By lemma
	\ref{descendants}, eventually every correct node will create a descendant of $e$. By the
	properties of interesting payloads, either this descendant or one of its self-ancestors will
	then have interesting payload $p$. Thus, for every correct node, there will be an event that is
	an interesting event with payload $p$, which completes the proof.
\end{proof}

\begin{lem}[Aux values]\label{aux}
	If all correct nodes created an event in stage $s$, then all correct nodes will eventually
	create an event with an $aux$ value in stage $s$.
\end{lem}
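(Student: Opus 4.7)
The plan is to show that each correct node eventually creates a stage-$s$ event whose $\mathsf{bv}$ is non-empty, since by the definition of $\mathsf{aux}$ the value $\bot$ appears exactly when $\mathsf{bv} = \varnothing$ and no ancestor has already decided. Two degenerate situations can be dispatched up front. First, if any ancestor of some stage-$s$ event already has $\mathsf{meta\_election} \neq \bot$, the first clauses of $\mathsf{bv}$ and $\mathsf{aux}$ inherit the decided value, so $\mathsf{aux}$ is immediately non-$\bot$. Second, if a correct node's events advance to stage $s+1$, the definition of $\mathsf{next\_stage}$ required $\mathsf{supermajority\_valid\_aux}$ to hold at some stage-$s$ self-parent, which in particular forces $\mathsf{bv}$ and hence $\mathsf{aux}$ to be non-$\bot$ at that self-parent; so the interesting case is a correct node whose events remain in stage $s$.

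First I would examine the initial estimates at stage $s$. The first stage-$s$ event of a correct node has $\mathsf{init\_est} = \{v_i\}$ for a single $v_i \in \{0,1\}$ (it is necessarily a singleton, since $\mathsf{next\_est}$ produces only singletons or $\bot$, and $\bot$ is excluded because the node has actually entered stage $s$). Every subsequent stage-$s$ event of that node then has $\mathsf{est}$ equal to $\{v_i\}$ or $\{0,1\}$, so $v_i$ remains in $\mathsf{est}$ throughout the stage, i.e.\ estimates grow monotonically along self-ancestry inside stage $s$. Let $V \subseteq \{0,1\}$ collect the values $v_i$ that occur among correct nodes. If $|V|=1$, say $V = \{v\}$, then every stage-$s$ event by a correct node carries $v$ in its estimate, so by Lemma \ref{descendants} each correct node eventually creates a stage-$s$ event $e$ that descends from a stage-$s$ event of every other correct node; since correct nodes form a supermajority and Lemma \ref{stronglysee} rules out faulty forks spoiling the ``sees'' relation for correct creators, $e$ sees a supermajority of distinct-node stage-$s$ events with $v \in \mathsf{est}$, placing $v$ in $\mathsf{bv}(e)$. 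If $|V|=2$, let $a, b$ count the correct nodes with initial estimate $\{0\}$ and $\{1\}$; from $a + b \geq N - f > \tfrac{2}{3}N$ at least one of them strictly exceeds $\tfrac{N}{3}$, say $a > \tfrac{N}{3}$. Then by Lemma \ref{descendants} every correct node with initial estimate $\{1\}$ eventually builds a stage-$s$ event seeing more than $\tfrac{N}{3}$ distinct-node stage-$s$ events with $0$ in their estimates, and the second clause of $\mathsf{est}$ promotes its own estimate to $\{0,1\}$. Hence $0$ eventually belongs to every correct node's estimate in stage $s$, and we fall back to the $|V|=1$ argument with $v=0$.

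The main obstacle I expect is composing the two ``eventually'' quantifiers cleanly: the supermajority of witnesses used to place $v$ in $\mathsf{bv}(e)$ must be drawn from stage-$s$ events that are already in their stabilised state, in particular after every correct node with opposite initial estimate has added the shared value to its $\mathsf{est}$. The monotonicity of $\mathsf{est}$ along self-ancestry within a stage, noted above, is precisely what licenses this composition, so I would state and use it explicitly before the final invocation of Lemma \ref{descendants}. Byzantine creators cannot spoil any of the counts because $\mathsf{bv}$ and the ``more than $\tfrac{N}{3}$'' rule in $\mathsf{est}$ are both phrased over events by \emph{different} nodes, and forks by a single faulty creator collapse to a single contribution by Lemma \ref{stronglysee}.
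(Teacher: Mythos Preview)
Your argument is correct and follows essentially the same route as the paper: pigeonhole on the correct nodes' initial stage-$s$ estimates yields a value held by at least $N/3$ of them, which then propagates via the second clause of $\mathsf{est}$ to every correct node, giving a supermajority that populates $\mathsf{bv}$ and hence $\mathsf{aux}$. Your treatment is more explicit (the $|V|=1$/$|V|=2$ case split, the degenerate cases, and the monotonicity remark), but the core mechanism is identical; the only quibble is that your appeal to Lemma~\ref{stronglysee} is unnecessary here, since the events being seen are created by correct nodes and hence cannot be part of a fork.
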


\begin{proof}
	Every event in stage $s$ has at least one estimate. There is a supermajority of correct nodes,
	so there will exist a value $v$ such that at least $N/3$ correct nodes have $v$ in estimates.
	Thus, there will exist a value $v$ that will eventually be estimated by all honest nodes, which
	means it will get promoted to bin-values by all honest nodes. Once $\mathsf{bin\_values}$ is
	not empty for an event, this event also has an aux value. Since all honest nodes will
	eventually create events with non-empty bin-values, these events will have aux values, which
	completes the proof.
\end{proof}

\begin{lem}[Progress]\label{progress}
	If a correct node created a gossip event in stage $s$, every other correct node will eventually
	create an event in stage $s$ as well.
\end{lem}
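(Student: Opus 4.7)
The plan is to induct on the stage number $s$, with the statement itself as the inductive hypothesis; the main tools will be Lemmas~\ref{interesting}, \ref{aux}, and \ref{descendants}, each of which promotes some local fact at a correct node into a universal fact at every correct node. For the base case $s=0$, if a correct node has an observer then that observer strongly sees interesting events produced by a supermajority of creators, and since $3f<N$ at least one of those creators is correct. Lemma~\ref{interesting} then yields an interesting event with the matching payload at every correct node, and Lemma~\ref{descendants} drives those events to propagate so that the first event of each correct node $M$ which strongly sees a supermajority of them is, by definition, an observer of $M$ and hence a stage-$0$ event.

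For the inductive step, assume the statement for $s-1$ and suppose some correct node has already produced a stage-$s$ event; then a self-ancestor of that event is a stage-$(s-1)$ event at which both $\mathsf{supermajority\_valid\_aux}$ holds and $\mathsf{next\_est}$ is non-$\bot$. By the inductive hypothesis every correct node eventually creates a stage-$(s-1)$ event, and Lemma~\ref{aux} then yields a stage-$(s-1)$ event with a non-$\bot$ $\mathsf{aux}$ value at every correct node. Lemma~\ref{descendants} forces those aux-bearing events, originating from the more than $2N/3$ correct nodes, to appear jointly in the history of some event $f$ on every correct node's chain, and a short combinatorial check shows that their aux values sit inside $\mathsf{bv}(f)$, so $\mathsf{supermajority\_valid\_aux}(f)$ holds.

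To push $f$'s self-descendant into stage $s$ we also need $\mathsf{next\_est}(f)\neq\bot$, equivalently $\mathsf{coin\_flip}(f)\neq\bot$. If $\mathsf{stage}(f)\in C_0\cup C_1$ this is automatic. If $\mathsf{stage}(f)\in C_f$, every correct node that reaches the supermajority-valid-aux condition publishes a coin share, producing more than $2N/3$ correct shares, and one more application of Lemma~\ref{descendants} forces these into some descendant $g$ with $\mathsf{count\_shares}(g)\geq N/3$; the next event of $g$'s creator is then a stage-$s$ event. The main obstacle, as I see it, lies in the scheduling of these witnessing events: one must pick $f$ (and then $g$) late enough that the supermajority of stage-$(s-1)$ aux values is jointly visible \emph{and} those values actually lie in $\mathsf{bv}(f)$, yet early enough that the meta-election has not already been decided (which would short-circuit the stage increment in $\mathsf{next\_stage}$). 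Pinning this ordering down via a monotonicity-of-information argument on successive sync calls is the delicate part; once that is settled, the rest of the induction is mechanical bookkeeping.
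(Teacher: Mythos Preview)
Your proof follows the same inductive structure as the paper's, invoking Lemmas~\ref{interesting}, \ref{aux}, and \ref{descendants} at the same points; in fact you are more explicit than the paper about the coin-share step needed for $\mathsf{next\_est}\neq\bot$ in genuine-flip stages, which the paper glosses over with ``enough to progress to the next stage.'' Your worry about a decided meta-election short-circuiting the stage increment is unfounded, however: $\mathsf{next\_stage}$ never consults $\mathsf{meta\_election}$, so that concern can simply be dropped and the ``delicate scheduling'' collapses to the straightforward monotonicity you already identified.
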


\begin{proof}
	Assume $s = 0$. The first gossip event in stage 0 is an observer. If a correct node created an
	observer, it must have strongly seen a supermajority of interesting events. A supermajority
	always contains a correct node, so at least one correct node created an interesting event. By
	lemma \ref{interesting}, all correct nodes will eventually have created interesting events.

	If all correct nodes created interesting events, it means that eventually all correct nodes will
	create an event strongly seeing a supermajority of interesting events - as there is a
	supermajority of correct nodes, they continue gossipping and they never fork. Thus, all correct
	nodes will eventually create observers, which completes the proof for $s = 0$.

	Assume the lemma holds for stage $s$. We will now prove that this implies it holds for stage
	$s+1$.

	Assume a correct node created an event in stage $s+1$. This means that this event sees a
	supermajority of events in stage $s$ with some \emph{aux} values. This means that at least one
	honest node created an event in stage $s$, so by our assumption, all honest nodes will have
	eventually created an event in stage $s$. By lemma \ref{aux}, this means that all honest nodes
	will eventually have created an event in stage $s$ with an aux value. All honest nodes will
	eventually create events that see all these events with aux values, which constitutes a
	supermajority, which is enough to progress to the next stage - so all honest nodes will create
	an event in stage $s+1$. By induction, the proof is complete.
\end{proof}

\begin{lem}\label{claima}
	If all correct nodes created events in stage $s$ and stage $s$ is a genuine flip stage, then the
	estimates of the nodes' events in the next stage will be in agreement with probability
	$\geq \frac{1}{2}$.
\end{lem}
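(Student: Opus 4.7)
The plan is to reduce the lemma to a case analysis based on which correct nodes, in the first stage-$s$ event where $\mathsf{next\_est}$ becomes defined, see a supermajority of $\mathsf{aux}$ values, and then to exploit the unbiasedness of the genuine coin. I rely on Lemmas \ref{aux} and \ref{progress} for the existence of such events at every correct node, and on two properties of stage-$s$ $\mathsf{aux}$ values at correct nodes: correct nodes never fork, and the definition of $\mathsf{aux}$ forces it to be monotone along a self-chain within stage $s$, so each correct node has at most one non-$\bot$ stage-$s$ $\mathsf{aux}$ value.

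The first step is to show that no two correct nodes can witness opposite supermajorities in stage $s$. Suppose events $e_0, e_1$ created by correct nodes satisfy $\mathsf{count\_aux}(e_0,0) > \tfrac{2}{3}N$ and $\mathsf{count\_aux}(e_1,1) > \tfrac{2}{3}N$. Because each count ranges over distinct creator-ids, the two supporting creator sets must overlap in more than $\tfrac{N}{3} > f$ creators, and therefore contain some correct creator $C$. Then $e_0$ sees a stage-$s$ event of $C$ with $\mathsf{aux}=0$ while $e_1$ sees one with $\mathsf{aux}=1$, contradicting the uniqueness of $C$'s stage-$s$ $\mathsf{aux}$. Hence there is at most one value $v$, call it the dominant value, that any correct node can see in supermajority in stage $s$.

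I would then partition the correct nodes into the set $\mathcal{M}$, whose stage-$s$ events see a supermajority of $\mathsf{aux}=v$, and the set $\mathcal{R}$, whose events see no supermajority of either value. Reading off $\mathsf{next\_est}$, every node in $\mathcal{M}$ outputs next estimate $\{v\}$ and every node in $\mathcal{R}$ outputs $\{\mathsf{coin\_flip}(e)\}$. Because $s$ is a genuine-flip stage and the BLS threshold signature on $\mathsf{round\_hash}$ is unique, all correct nodes that reconstruct the signature obtain the same bit $c$. Three cases arise: if $\mathcal{R}=\varnothing$, all correct nodes output $\{v\}$ (agreement with probability $1$); if $\mathcal{M}=\varnothing$, all output $\{c\}$ (agreement with probability $1$); and if both are nonempty, the outputs agree exactly when $c=v$.

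The delicate step, and in my view the main obstacle to a fully rigorous argument, is the probabilistic step in the mixed case. The adversary controls message scheduling and may try to correlate $c$ with its own choice of $\mathsf{aux}$ values, so agreement with probability $\tfrac{1}{2}$ cannot be asserted on purely combinatorial grounds. The required ingredient is the pseudorandomness of BLS threshold signatures: coin shares for stage $s$ are only produced after a node has already committed to its $\mathsf{aux}$ value, and fewer than $\tfrac{N}{3}$ shares reveal nothing about the aggregated signature, so both $v$ and the partition $(\mathcal{M},\mathcal{R})$ are fixed before $c$ becomes predictable, hence independent of $c$. This gives $\Pr[c=v]=\tfrac{1}{2}$, and combining with the two unconditional cases yields an overall agreement probability of at least $\tfrac{1}{2}$, as claimed.
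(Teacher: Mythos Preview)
Your proof is correct and follows essentially the same strategy as the paper: establish that correct nodes can see a supermajority for at most one value, then argue that the genuine coin is independent of that value, so the mixed case agrees with probability $\tfrac{1}{2}$. Your quorum-intersection argument for the first step (two supermajorities overlap in more than $\tfrac{N}{3}>f$ creators, hence share an honest creator with a unique $\mathsf{aux}$) is in fact crisper than the paper's version, which reasons via the pigeonhole on the first $2f{+}1$ $\mathsf{aux}$ values seen by the first honest node to publish a share and frames the rest from the adversary's perspective.

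One small imprecision: you assert that the partition $(\mathcal{M},\mathcal{R})$ is fixed before $c$ becomes predictable. That is not quite true---after learning $c$ the adversary can still schedule messages so as to move correct nodes between $\mathcal{M}$ and $\mathcal{R}$. What \emph{is} fixed once honest $\mathsf{aux}$ values are committed (and hence before any honest coin share is released) is the identity of the unique candidate $v$: if any correct node ever sees a supermajority, it must be for the value held by at least $f{+}1$ honest $\mathsf{aux}$ values. Since your probability bound only needs $v$ to be independent of $c$, the slip is harmless and the conclusion stands.
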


\begin{proof}
	Let us consider the worst case scenario, in which there are $f$ malicious nodes among the
	$N = 3f + 1$ nodes. Let us also assume that the adversary controls the timing of the messages,
	so by controlling which messages are delivered when, they can control the gossip pattern and
	effectively, to some extent, the values associated with the gossip events.

	Assume the adversary tries to force a disagreement among the honest nodes. The only way to do so
	is to make some honest nodes see no agreeing supermajority among the aux values, which will make
	them take the coin value as the next estimate, and other honest nodes to see a supermajority
	of aux values opposite to the coin value. In other cases the honest nodes will automatically
	have agreeing estimates in stage $s+1$.

	The adversary cannot control the coin value, so they need to learn its value first. It is only
	possible if at least one honest node published its coin share, which means it has already seen
	a supermajority of aux values. Then, for the first $2f+1$ aux values seen by the correct node,
	exactly one of the following is true:

	\begin{itemize}
		\item There are at least $f+1$ \emph{true} aux values.
		\item There are at least $f+1$ \emph{false} aux values.
	\end{itemize}

	Whichever one is true, no matter what control the adversary has over the remaining aux values,
	it cannot make other honest nodes see a supermajority for the opposite value. Thus, it is out
	of the adversary's control to make the nodes disagree, as they couldn't have known beforehand
	which value they need to have a supermajority of.

	If there is no supermajority of agreeing aux values, the honest nodes will automatically be in
	agreement. If there is one and all of them see it, they will also be in agreement. If not all
	of them see it, there is a $\frac{1}{2}$ probability that the coin value will agree, thus also
	making all honest nodes agree.

	Thus, the probability of the honest nodes agreeing in stage $s+1$ is at least $\frac{1}{2}$.
\end{proof}

\begin{lem}\label{claimb}
	If all correct nodes' first events in stage $s$ had $\mathsf{estimates} = \{v\}$, either they
	will decide $v$ in stage $s$, or their first events in stage $s+1$ will also have
	$\mathsf{estimates} = \{v\}$.
\end{lem}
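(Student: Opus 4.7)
The plan is to prove the stronger invariant that every honest event at stage $s$ has $\mathsf{est} = \{v\}$, and then deduce the dichotomy by examining the consequences for $\mathsf{bv}$, the transition to stage $s{+}1$, and the $\mathsf{meta\_election}$ clauses. First note that the hypothesis forces $\mathsf{init\_est}(e) = \{v\}$ for every honest event $e$ at stage $s$: this is direct from the assumption when $e$ is the first stage-$s$ event of its creator, and follows by a short causal induction through $\mathsf{self\_par}$ otherwise. The invariant itself is then established by a minimality argument in the ancestor partial order. Let $e$ be an honest event at stage $s$ with $\neg v \in \mathsf{est}(e)$ that is minimal among all such events. Since $\mathsf{init\_est}(e) = \{v\}$, only the second clause of $\mathsf{est}$ can introduce $\neg v$, and it requires $e$ to see $\geq N/3$ stage-$s$ events by distinct nodes with $\neg v \in \mathsf{est}$. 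At most $f < N/3$ of these can be malicious, so at least one is an honest stage-$s$ ancestor of $e$ with $\neg v \in \mathsf{est}$, contradicting minimality.

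From the invariant, at most $f < N/3 < 2N/3$ stage-$s$ events by distinct nodes can have $\neg v \in \mathsf{est}$, so $\mathsf{bv}(e) \subseteq \{v\}$ for every stage-$s$ event $e$. The clause $\mathsf{aux}(x) \in \mathsf{bv}(e)$ appearing in $\mathsf{count\_aux}$ then forces $\mathsf{count\_aux}(e, \neg v) = 0$ for every stage-$s$ event $e$, so the $\mathsf{meta\_election} = \neg v$ branch is unreachable at stage $s$. Hence any decision taken in stage $s$ must be for $v$, which settles the first branch of the disjunction.

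For the second branch, consider an honest node that proceeds to stage $s{+}1$ and let $e$ be the stage-$s$ self-parent of its first stage-$(s{+}1)$ event $e'$. The definition of $\mathsf{next\_stage}$ gives $\mathsf{supermajority\_valid\_aux}(e)$ and $\mathsf{next\_est}(e) \neq \bot$. Combined with $\mathsf{bv}(e) \subseteq \{v\}$, the supermajority clause forces $\mathsf{bv}(e) = \{v\}$ and $\mathsf{count\_aux}(e, v) > 2N/3$. Of the three clauses of $\mathsf{next\_est}$, only the first is compatible with these counts and with $\mathsf{next\_est}(e) \neq \bot$, so $\mathsf{coin\_flip}(e) \neq \bot$ and $\mathsf{next\_est}(e) = \{v\}$. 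Thus $\mathsf{init\_est}(e') = \{v\}$, and a second application of the minimality argument at stage $s{+}1$ (noting that $e'$ is the first stage-$(s{+}1)$ event of its creator, so any putative minimal honest violator at stage $s{+}1$ would again require $\geq N/3$ honest predecessors) yields $\mathsf{est}(e') = \{v\}$.

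The main obstacle I expect is the formal bookkeeping around $\mathsf{aux}$'s inheritance across stage boundaries, which could a priori carry stale $\neg v$ values into stage $s$. The saving observation is that every decision-relevant predicate ($\mathsf{supermajority\_valid\_aux}$, $\mathsf{count\_aux}$, $\mathsf{meta\_election}$, $\mathsf{next\_est}$) restricts both to events at the same stage and to $\mathsf{aux}$ values lying in $\mathsf{bv}(e) \subseteq \{v\}$, so inherited $\neg v$ values are inert at stage $s$ and no explicit case analysis over stage-boundary inheritance is required.
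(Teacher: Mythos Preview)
Your proof is correct and follows essentially the same route as the paper's: show that no honest stage-$s$ event can carry $\neg v$ in its estimates, deduce that $\mathsf{bv} \subseteq \{v\}$ throughout stage $s$, and conclude that the only possible decision is $v$ and the only possible $\mathsf{next\_est}$ is $\{v\}$. The paper argues the first point more tersely (fewer than $N/3$ creators can supply $\neg v$, so the second clause of $\mathsf{est}$ never fires) where you use an explicit minimality argument, and the paper asserts directly that every stage-$s$ event has $\mathsf{aux} = v$ whereas you instead rely on the $\mathsf{aux}(x) \in \mathsf{bv}(e)$ filter inside $\mathsf{count\_aux}$ to make any stray $\neg v$ values inert. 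Your filter argument is arguably the more robust of the two, since it sidesteps precisely the stage-boundary inheritance issue you flag in your final paragraph without needing to rule it out case by case.
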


\begin{proof}
	If all correct nodes only estimate $v$, there is no way for any event to see even $\frac{N}{3}$
	of estimates for $\neg v$ - so no event by a correct node will have it in its estimates in
	stage $s$.

	For a value to be an element of $\mathsf{bin\_values}$, there must be a supermajority of events
	estimating that value. Because of the above, the only value that can have a supermajority is
	$v$. Thus, every event with nonempty $\mathsf{bin\_values}$ will have it equal to $\{v\}$.
	Hence, every event with an $\mathsf{aux}$ value will have it equal to $v$.

	In order to proceed to the next stage, an event has to see a supermajority of valid
	$\mathsf{aux}$ values. No event can have a value other than $v$ as $\mathsf{aux}$ in stage $s$,
	so there will always be a supermajority for $v$. Depending on the coin flip value, this can
	either lead to deciding $v$, or estimating $v$ in stage $s+1$. Either way, the agreement will
	still hold.
\end{proof}

\begin{lem}\label{decide}
	If all correct nodes' first events in stage $s$ had $\mathsf{estimates} = \{v\}$, they will all
	decide $v$ eventually.
\end{lem}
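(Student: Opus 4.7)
The plan is to combine an inductive invariant on stages, maintained by Lemma \ref{claimb}, with a probabilistic argument about the common coin. First I would establish the invariant that, unless some correct node has already decided $v$ at an earlier stage, the first events of all correct nodes in every stage $s' \geq s$ have $\mathsf{estimates} = \{v\}$. This follows by a straightforward induction on $s' - s$: the base case is the hypothesis of the lemma, and the inductive step is exactly Lemma \ref{claimb}.

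Next I would analyse what happens within an undecided stage $s' \geq s$. Reusing the reasoning inside the proof of Lemma \ref{claimb}, every event by a correct node in stage $s'$ has $\mathsf{aux} = v$, so eventually $\mathsf{count\_aux}(e, v) > \frac{2}{3}N$ holds for every sufficiently late event $e$ of every correct node (using Lemmas \ref{progress} and \ref{aux} to guarantee that such events actually come to exist). Hence the fate of stage $s'$ is determined entirely by $\mathsf{coin\_flip}$: if $\mathsf{coin\_flip}(e) = v$, every correct node decides $v$; if $\mathsf{coin\_flip}(e) = \neg v$, nobody decides but the invariant propagates to stage $s' + 1$.

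It then suffices to show that some stage $s' \geq s$ will have $\mathsf{coin\_flip} = v$. Writing $C_v$ for $C_1$ or $C_0$ according to whether $v = 1$ or $v = 0$, if $C_v$ contains any element $\geq s$ we are done immediately. Otherwise we lean on $|C_f| = |\mathbb{N}|$: in every genuine-flip stage, the coin bit is the lowest-order bit of a BLS threshold signature on a stage-unique $\mathsf{round\_hash}$, so the adversary cannot bias it (they control fewer than $N/3$ shares) and cannot learn it before at least one honest node has already committed to the supermajority of $\mathsf{aux}$ values it saw. From the adversary's point of view each genuine flip is therefore an independent fair coin, and the probability of never drawing $v$ in infinitely many trials is zero.

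The main obstacle is making this probabilistic argument watertight against an adaptive adversary that schedules messages and shapes the gossip graph. One must argue that conditioning on the entire history up to the moment the first honest coin share in stage $s'$ becomes visible still leaves the reconstructed bit uniformly distributed, so that successive genuine flips are independent fair trials. This is a standard consequence of BLS unforgeability together with the distinctness of $\mathsf{round\_hash}$ across stages, but it needs a careful statement before the Borel--Cantelli-style conclusion --- probability of never deciding bounded above by $\prod_{i} \tfrac{1}{2} = 0$ --- can be legitimately invoked.
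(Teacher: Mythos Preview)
Your proposal is correct and follows essentially the same route as the paper: show the invariant $\mathsf{estimates}=\{v\}$ persists stage to stage, deduce that every stage produces a supermajority of $\mathsf{aux}=v$, and then argue that infinitely many genuine flips guarantee $\mathsf{coin\_flip}=v$ eventually with probability~1. The only differences are organizational: you invoke Lemma~\ref{claimb} explicitly for the inductive step and Lemmas~\ref{progress}/\ref{aux} for liveness, whereas the paper re-derives the same facts inline from the definition of $\mathsf{est}$; and you spell out the independence/unpredictability argument for the threshold coin more carefully than the paper does.
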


\begin{proof}
	No matter what malicious nodes do, there is less than a third of them, so no event by a correct
	node will have $\neg v$ in estimates (by definition of the $\mathsf{est}$ function). This means
	that for $\mathsf{bin\_values}$ of an event to be non-empty, it must see a supermajority of
	estimates for $v$, as there will never be a supermajority for $\neg v$.

	The above means that no correct node will add $\neg v$ to $\mathsf{bin\_values}$, so all of
	them will eventually create an event with $\mathsf{aux} = v$. This means there will be a
	supermajority of events by different creators with $\mathsf{aux} = v$, which will make the
	correct nodes either decide in stage $s$ (if $\mathsf{coin\_flip} = v$), or estimate $v$ for the
	next stage. This will repeat until $\mathsf{coin\_flip} = v$ and the nodes decide $v$. Since
	there will be infinitely many genuine flips with the number of stages tending to infinity, and
	each genuine flip will result in $v$ with probability $\frac{1}{2}$, this will eventually happen
	with probability 1.
\end{proof}

\begin{thm}[Binary Byzantine Consensus]\label{binconsensus}
	The algorithm for calculating meta-election results presented in this paper satisfies the
	general properties of a Byzantine fault tolerant consensus algorithm:
	\begin{itemize}
		\item \emph{Validity} - if a correct node decides on a value, it has been proposed by a
			correct node.
		\item \emph{Agreement} - if a correct node decides on a value, all correct nodes decide
			on that value.
		\item \emph{Integrity} - once a correct node decides on a value, it never decides on
			another value.
		\item \emph{Termination} - all correct nodes eventually decide with probability 1.
	\end{itemize}
\end{thm}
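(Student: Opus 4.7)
The plan is to verify each of the four clauses in turn, exploiting the lemmas already established. \emph{Integrity} is essentially by definition: the first clause of $\mathsf{meta\_election}$ fires as soon as any ancestor $d$ satisfies $\mathsf{meta\_election}(d) = v \neq \bot$, so every descendant of a deciding event inherits the same value and no correct node can change its mind. For \emph{Validity} I would use Lemma \ref{decide} in contrapositive form: if no correct node proposes $v$, then every correct observer has initial meta-vote $\neg v$; because $3f < N$ there are strictly fewer than $N/3$ creators that can ever carry $v$ in their $\mathsf{est}$, so the $N/3$-threshold in the second branch of $\mathsf{est}$ is never reached by a correct event and every correct first event in every stage has $\mathsf{estimates} = \{\neg v\}$. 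Lemma \ref{decide} then forces every correct node to decide $\neg v$, so none decides $v$.

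\emph{Agreement} splits into two steps. First, I would show that two correct nodes cannot decide opposite values in the \emph{same} stage $s$. If $A$ decides $v$ in stage $s$ then its event sees more than $2N/3$ events by different creators with $\mathsf{aux} = v$; of these, at most $f$ are malicious, so at least $f+1$ distinct correct creators have $\mathsf{aux} = v$ in stage $s$. Because each correct creator has a single $\mathsf{aux}$ value per stage (once non-$\bot$ it is inherited along self-ancestry), and the ``by different nodes'' rule in $\mathsf{count\_aux}$ together with the no-fork structure of $\mathsf{sees}$ allows at most one distinct $\mathsf{aux}$ per creator, any other correct event can account for at most $N - (f+1) = 2f$ creators with $\mathsf{aux} = \neg v$, which falls short of the supermajority needed to decide $\neg v$. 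Second, I would show that once $A$ decides $v$ in stage $s$, every correct node reaching stage $s+1$ does so with $\mathsf{init\_est} = \{v\}$. The coin value at stage $s$ is a global, deterministic quantity (either a fixed bit or the lowest bit of the unique full BLS signature on the common $\mathsf{round\_hash}$), and since $A$ decided it must equal $v$; a case analysis of the three branches of $\mathsf{next\_est}$, combined with the bound on $\mathsf{count\_aux}(\cdot, \neg v)$ from step one, shows that the only non-$\bot$ outcome is $\{v\}$. By Lemma \ref{progress} every correct node eventually reaches stage $s+1$, and Lemma \ref{decide} then delivers the shared decision on $v$.

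\emph{Termination} is a direct combination of the preceding lemmas. Lemma \ref{progress} guarantees that every correct node reaches every stage, and therefore produces an event in each of the infinitely many genuine-flip stages in $C_f$. By Lemma \ref{claima}, the estimates at the start of the stage following any given genuine flip are unanimous with probability at least $\tfrac{1}{2}$, so a standard geometric / Borel--Cantelli argument makes unanimity almost sure. Once unanimity holds at the start of some stage, Lemma \ref{decide} delivers decision with probability one.

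The main obstacle will be the second half of the Agreement step: pinning down that a correct node $B$ which merely enters stage $s+1$ after $A$'s decision in stage $s$ is forced into $\mathsf{init\_est} = \{v\}$. This requires combining (i) the supermajority-intersection bound from the first step to rule out $\mathsf{count\_aux}(\cdot, \neg v) > 2N/3$ at any correct event in stage $s$, (ii) global consistency of the stage-$s$ coin value, which in the genuine-flip case needs the determinism of BLS threshold signatures on the common $\mathsf{round\_hash}$, and (iii) a careful case split over the three branches of $\mathsf{next\_est}$. Once this is in place, the remaining work is bookkeeping and appeals to Lemma \ref{decide}.
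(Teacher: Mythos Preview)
Your proposal is correct and follows essentially the same route as the paper: Integrity by the first clause of $\mathsf{meta\_election}$, Validity via the contrapositive plus Lemma~\ref{decide}, Agreement via the supermajority-intersection bound combined with global consistency of the stage-$s$ coin, and Termination via Lemmas~\ref{progress}, \ref{claima} and~\ref{decide} together with a geometric argument. The only cosmetic differences are that the paper cites Lemma~\ref{claimb} rather than Lemma~\ref{decide} at the end of the Agreement step, and in Termination it spells out the intra-stage progression (non-empty $\mathsf{est} \to$ non-empty $\mathsf{bin\_values} \to \mathsf{aux}$) explicitly rather than absorbing it into an appeal to Lemma~\ref{progress}.
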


\begin{proof}[Validity]
	We will prove an equivalent statement: that if initially all correct nodes propose $v$, then
	all correct nodes will decide $v$. Since $v$ is a binary value, a node can only decide a value
	not proposed by a correct node if all correct nodes propose $v$, and the node decides $\neg v$.
	Thus, deciding $v$ when all correct nodes propose $v$ is equivalent to always deciding on a
	value proposed by a correct node.

	If all correct nodes propose $v$, they will all put $v$ in their estimates. By Lemma
	\ref{decide}, they will all decide $v$ eventually.
\end{proof}

\begin{proof}[Agreement]
	Assume there is an event $e$ created by a correct node which was able to decide a value $v$. It
	means that $\mathsf{coin\_flip}(e) = v$ and this event must have seen a supermajority of events
	with $\mathsf{aux} = v$. This means there was no supermajority for $\neg v$. Thus, if a correct
	node has seen a supermajority in this stage, it must have been for $v$, so it would decide $v$.
	If it hasn't, it would estimate $v$ for the next stage, which means there will be agreement at
	the start of the next stage. Following Lemma \ref{claimb}, this agreement will propagate to the
	end of the stage and the next stages, until everyone decides $v$.
\end{proof}

\begin{proof}[Integrity]
	Once an event $e$ created by a correct node decides on a value $v$, all later events created by
	that node will have event $e$ as an ancestor. Following the definition of
	$\mathsf{meta\_election}$, all later events will also decide $v$.
\end{proof}

\begin{proof}[Termination]
	By Lemma \ref{progress}, if a correct node creates an event in stage $s$, then every correct
	node eventually creates an event in stage $s$. This means there will be events by
	$> \frac{2}{3}N$ correct nodes, which will eventually be seen by every correct node.  Every
	such event will have non-empty estimates. It is not possible for both 0 and 1 to be estimated
	by $< \frac{N}{3}$ events by different correct nodes, so at least one of those values will
	eventually be an element of estimates of every correct node's event.

	Eventually, the events with agreeing estimates will all be seen by an event created by every
	correct node. Hence, every correct node will eventually create an event with non-empty
	$\mathsf{bin\_values}$, and so an $\mathsf{aux}$ value.

	The events with $\mathsf{aux}$ values will eventually be seen by every correct node's event,
	which means every correct node will eventually either decide or progress to the next stage.

	By Lemma \ref{claima}, after every stage with a genuine coin flip, all nodes' estimates agree
	with probability $> \frac{1}{2}$. This means that the probability of estimates still not
	agreeing at stage $s$ is less than:
	\[ (1 - \frac{1}{2})^{\mathsf{gf}(s)} = \frac{1}{2^{\mathsf{gf}(s)}} \]
	where $\mathsf{gf}(s)$ is the number of genuine flips up to stage $s$. Since we assume an
	infinite number of genuine flip stages as $s$ tends to infinity, $\mathsf{gf}(s)$ grows to
	infinity as $s$ grows, which implies that the probability of estimates not agreeing tends to 0.
	This means that the estimates will eventually agree with probability 1. By Lemma \ref{decide},
	the nodes will decide eventually after that happens.
\end{proof}

The above theorem proves that our algorithm will reach agreement about every single meta-election
in a Byzantine fault tolerant way. This is not the end, though - we also need to prove that
meta-elections lead to agreement about the next block in the ordered set. The proof of that is
presented below.

\begin{lem}\label{minvotes}
	If the result of a meta-election is $v$, there have been at least $\frac{N}{3}$ meta-votes for
	$v$.
\end{lem}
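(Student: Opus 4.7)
The plan is to argue by contrapositive: assume that fewer than $N/3$ nodes meta-voted for $v$ and derive that the meta-election must decide $\neg v$, contradicting the hypothesis. Let $V$ denote the set of nodes whose meta-vote is $v$, so $|V| < N/3$.

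The first step is to show that throughout stage $0$, no more than $|V|$ distinct creators can ever have $v$ in their estimates. The initial estimate of an observer is its meta-vote, so only events by the $|V|$ creators in $V$ start stage $0$ with $v \in \mathsf{init\_est}$. For any other event to \emph{acquire} $v$ through the second clause of $\mathsf{est}$, it would need to see at least $N/3$ events by distinct creators with $v$ already in their estimates. A straightforward induction over a topological ordering of the graph then shows that the only creators that can ever have $v$ in estimates lie in $V$; since $|V| < N/3$ the conversion threshold is never met. Consequently no event has $v \in \mathsf{bv}$ (which requires a supermajority), hence no event has $\mathsf{aux} = v$, and $\mathsf{count\_aux}(\cdot, v) = 0$ throughout stage $0$.

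The second step is to pin down $\mathsf{next\_est}$ for any event transitioning to stage $1$. Such a transition demands $\mathsf{supermajority\_valid\_aux}$, and because $v \notin \mathsf{bv}$ the only aux values counted as valid are $\neg v$; hence $\mathsf{count\_aux}(\cdot, \neg v) > \frac{2}{3}N$. The first clause of $\mathsf{next\_est}$ then fires with value $\neg v$, so every first event in stage $1$ — correct or malicious — has initial estimate $\{\neg v\}$. In particular, every correct node's first event in stage $1$ has estimates $\{\neg v\}$.

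Finally, Lemma \ref{decide}, applied with the roles of $v$ and $\neg v$ swapped, yields that every correct node eventually decides $\neg v$, contradicting the assumption that the result is $v$. The main obstacle is controlling Byzantine contributions, since a priori a malicious node might try to inject events carrying $v$ in their estimates in order to fabricate conversions. The resolution is that $\mathsf{init\_est}$, $\mathsf{est}$, $\mathsf{aux}$ and $\mathsf{stage}$ are all computed deterministically from the graph by each receiver, so a malicious creator cannot simply assert $v$ into its estimates — the prerequisite quorum of $N/3$ distinct creators with $v$ in estimates must actually exist, and the induction above shows that it never does.
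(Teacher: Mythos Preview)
Your proof is correct and follows essentially the same contrapositive route as the paper: assume fewer than $N/3$ meta-votes for $v$, argue that $v$ can never reach the $N/3$ conversion threshold in $\mathsf{est}$, hence never enters $\mathsf{bv}$, hence every valid $\mathsf{aux}$ is $\neg v$, forcing a decision of $\neg v$. The paper is terser at the last step (it simply asserts ``which will lead to a decision on $\neg v$ eventually'') whereas you spell out the transition via $\mathsf{next\_est}$ and then invoke Lemma~\ref{decide}; your explicit remark that all the meta-election functions are recomputed by receivers, so Byzantine creators cannot forge estimates, is a useful clarification the paper leaves implicit.
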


\begin{proof}
	Assume there have been less than $\frac{N}{3}$ meta-votes for $v$ and $v$ has been decided.
	When nodes that initially meta-voted $v$ create an event that sees a supermajority of
	meta-votes, this supermajority must contain at least $\frac{N}{3}$ votes for $\neg v$ - so
	their estimates will contain $\neg v$. On the other hand, no node that meta-voted $\neg v$ can
	ever create an event that will see at least $\frac{N}{3}$ estimates for $v$, so they won't add
	$v$ to estimates.

	Due to the above, any supermajority among the estimates must be for $\neg v$. Any event with
	non-empty $\mathsf{bin\_values}$ can thus only have $\neg v$ in this set, which means that
	all valid $\mathsf{aux}$ values will also be $\neg v$, which will lead to a decision on $\neg
	v$ eventually.

	This is a contradiction. Such a situation is impossible, which proves the lemma.
\end{proof}

\begin{lem}\label{nonempty}
	The set of nodes for which the result of meta-election is $true$ is always non-empty.
\end{lem}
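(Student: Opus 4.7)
The plan is to argue by contradiction via a double count of false meta-votes. Suppose, contrary to the lemma, that every meta-election decides $false$. Applying Lemma \ref{minvotes} to each of the $N$ meta-elections in turn, there must be at least $N/3$ meta-votes for $false$ on each node, so the total number of $false$ meta-votes, summed over all nodes being voted on, is at least $N \cdot N/3 = N^2/3$.

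For the matching upper bound I would regroup the same total by the observer that cast each meta-vote. By definition an observer strongly sees interesting events from a supermajority, that is, from strictly more than $2N/3$ of the nodes, so each observer meta-votes $true$ on strictly more than $2N/3$ nodes and $false$ on strictly fewer than $N/3$ nodes. Since at most $N$ distinct nodes contribute an observer, the total number of $false$ meta-votes is strictly less than $N \cdot N/3 = N^2/3$.

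The strict upper bound $< N^2/3$ contradicts the lower bound $\geq N^2/3$, so the assumption is untenable and at least one meta-election must decide $true$. The main delicate point is the bookkeeping of the double sum: the lower bound aggregates meta-votes by the node being voted on (a ``column'' of an $N \times N$ table of meta-votes), while the upper bound aggregates by the observer casting each vote (a ``row''). Faulty nodes that refuse to create an observer at all simply remove a row from the upper-bound side and only strengthen the strict upper bound, so such behaviour does not disturb the argument; and the strictness of the supermajority inequality in the observer definition is exactly what keeps the two sides from meeting at $N^2/3$.
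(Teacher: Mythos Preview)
Your proof is correct and mirrors the paper's own argument almost exactly: both derive the lower bound $\geq N^2/3$ on false meta-votes from Lemma~\ref{minvotes} applied column-by-column, and the strict upper bound $< N^2/3$ from the observer definition applied row-by-row. Your additional remarks about the double-count bookkeeping and about missing observers from faulty nodes are a nice clarification but do not change the structure of the argument.
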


\begin{proof}
	Assume all meta-elections resulted in $false$. By Lemma \ref{minvotes}, at least $\frac{N}{3}$
	nodes meta-voted $false$ for every node, so there have been at least $\frac{N^2}{3}$ meta-votes
	for $false$.

	On the other hand, by definition of an observer, every node voted $true$ for more than
	$\frac{2}{3}N$ nodes - so there have been more than $\frac{2}{3}N^2$ meta-votes for $true$,
	which leaves less than $\frac{N^2}{3}$ meta-votes for $false$ (there are $N^2$ meta-votes in
	total: $N$ nodes meta-vote in $N$ meta-elections).

	This is a contradiction, which proves the lemma.
\end{proof}

\begin{thm}[Byzantine Consensus]
	The algorithm for calculating the next block presented in this paper satisfies the general
	properties of a Byzantine fault tolerant consensus algorithm:
	\begin{itemize}
		\item \emph{Validity} - if a correct node decides on a next block, its payload was in at
			least one interesting event created by a correct node.
		\item \emph{Agreement} - if a correct node decides on a next block, all correct nodes decide
			on that block.
		\item \emph{Integrity} - once a correct node decides on a next block, it never decides on
			another block.
		\item \emph{Termination} - all correct nodes eventually decide with probability 1.
	\end{itemize}
\end{thm}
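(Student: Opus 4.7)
The plan is to dispatch the four clauses in turn, leaning on Theorem~\ref{binconsensus} and Lemma~\ref{consistency} for the routine parts and spending the real effort on Validity.

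\emph{Agreement} reduces to determinism of the next-block selection over already-agreed data: Theorem~\ref{binconsensus} gives all correct nodes the same meta-election outcomes, Lemma~\ref{consistency} gives them a consistent view of the ancestor subgraph used by the selection, and the tie-breaking rules (earliest interesting event per chosen node, lexicographic choice of a valid block carried by that event, vote counting with lexicographic tie-break) are deterministic functions of exactly that data. I would walk through each step and confirm that none of them touches node-local information. \emph{Integrity} is then immediate: the output is a pure function of the final meta-election outcomes, which are themselves final by the Integrity clause of Theorem~\ref{binconsensus}, and constraint~4 on interesting payloads additionally forbids the decided payload from recurring in any future block calculation. \emph{Termination} is a union bound over the $N$ meta-elections: each terminates with probability~1 by Theorem~\ref{binconsensus}, Lemma~\ref{progress} propagates the stage counters so that eventually every correct node learns every outcome, and Lemma~\ref{nonempty} guarantees the resulting \emph{true}-set is non-empty, so the deterministic selection step completes and returns a block.

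The hard part is \emph{Validity}. I would trace the pipeline backwards from a decided payload $p$: by the block-selection rule, $p$ is an interesting payload of the chosen (earliest) interesting event $I_Y$ of some node $Y$ whose meta-election decided \emph{true}. The subtlety is that $Y$ itself may be Byzantine, so $I_Y$ is not yet a correct witness. To recover one, I would invoke Lemma~\ref{minvotes}: deciding \emph{true} for $Y$ required at least $N/3$ meta-votes for \emph{true}, so at least one correct node meta-voted \emph{true} for $Y$, and by definition of observer that correct node strongly saw $I_Y$; this both certifies that $I_Y$ is a well-formed event in the consistent graph and that $p$ sits among its interesting payloads. Lemma~\ref{descendants} then supplies, for every correct node $C$, an event $e$ by $C$ that descends from $I_Y$. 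Constraint~3 on interesting payloads forces either $e$ itself, or some self-ancestor of $e$ (which is necessarily another event by $C$), to carry $p$ as an interesting payload. Either way, $C$ has an interesting event containing $p$, giving Validity. The main obstacle is precisely this backward-tracing step, since it must promote a possibly-malicious carrier of $p$ into a genuine correct-node witness; the combination of Lemma~\ref{minvotes}, Lemma~\ref{descendants} and the propagation clause (constraint~3) is what makes that promotion go through.
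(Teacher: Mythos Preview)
Your overall plan is sound and tracks the paper's four-part structure; Integrity and Termination are essentially as in the paper. Two points need attention.

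\textbf{Agreement.} Lemma~\ref{consistency} alone is not enough. Consistency says that any event present in two graphs has the same ancestors in both, but it does \emph{not} prevent a Byzantine node $Y$ from forking: $Y$ may create two incomparable interesting events, and different observers may strongly see different branches. In that situation ``earliest interesting event of $Y$'' is not a deterministic function of the shared data, and your determinism argument breaks. The paper closes this gap with Lemma~\ref{stronglysee}: if one event strongly sees one side of a fork, no event in any consistent graph can strongly see the other side, so all observers that meta-voted \emph{true} on $Y$ necessarily see interesting events lying on a single linear branch, on which ``earliest'' is well-defined and agreed. You need this step.

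\textbf{Validity.} Your direct argument via Lemma~\ref{minvotes} is correct in spirit and is the contrapositive of the paper's proof, which argues by contradiction: if no correct node had an interesting event with $p$, then by constraint~3 no correct node could even have created a \emph{descendant} of one, hence no correct observer meta-votes \emph{true} on its creator, hence (by binary Validity in Theorem~\ref{binconsensus}) that creator is not elected and $p$ cannot be chosen. Two small repairs to your version. First, the correct observer you extract strongly sees \emph{some} interesting event by $Y$, not $I_Y$ itself; you only get $I_Y$ as an ancestor of the observer after the linear-history argument above (again Lemma~\ref{stronglysee}) places $I_Y$ below whatever was strongly seen. Second, Lemma~\ref{descendants} is stated only for events created by \emph{correct} nodes, so you cannot apply it to $I_Y$ directly when $Y$ may be Byzantine; you would have to chain through the correct observer. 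But the whole detour to ``every correct node $C$'' is unnecessary: Validity asks for a \emph{single} correct witness, and the correct meta-voter $Z$ you already obtained from Lemma~\ref{minvotes}, together with constraint~3 applied to $Z$'s observer, gives it immediately.
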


\begin{proof}[Validity]
	Assume that no correct node created an interesting event with payload $p$, but $p$ was still
	decided as the payload of the next block.

	This means that only faulty nodes could create interesting events with payload $p$, so there is
	less than $N/3$ such interesting events. Furthermore, if a correct node created a descendant of
	such an interesting event, then either it or one of its self-ancestors would have to be an
	interesting event with payload $p$ as well. This would contradict our assumption, so no correct
	node could have created a descendant of an interesting event with payload $p$.

	It follows from this that when correct nodes create observers, no such observer can meta-vote
	\emph{true} for a creator of an interesting event with payload $p$ (as it would require the
	observer to be a descendant of such an event). This means that after the meta-election is
	complete, no interesting event with payload $p$ will be taken into consideration, so $p$ will
	not be decided as the payload of the next stable block.

	QED by contradiction.
\end{proof}

\begin{proof}[Agreement]
	Assume that a correct node decided the next block $B$. Since a decision has been reached, this
	means that there is consensus about the meta-votes, so every correct node will have chosen the
	same nodes' interesting events.

	For any observer to meta-vote $true$ on a node, it must have strongly seen an interesting event
	by that node. By Lemma \ref{stronglysee}, even if that node created a fork, if any other
	observers also voted $true$ on that node, they must have strongly seen interesting events on the
	same fork. Thus, we can consider interesting events by all elected nodes to form linear
	histories - which will be seen the same way by all nodes by Lemma \ref{consistency}.

	In a linear history, the earliest interesting event is well-defined. Also, because all correct
	nodes see the same history, they will all choose the same interesting event as the earliest. If
	the interesting events has multiple payloads, all correct nodes will use the same tie-breaker
	algorithm and choose the same single one. Thus, all correct nodes will gather the same set of
	votes, and because they use the same voting rules, decide the same block as the next one.
\end{proof}

\begin{proof}[Integrity]
	By construction of the algorithm, once the next block has been decided, it is appended to the
	ordered set and no other block can be decided in its place.
\end{proof}

\begin{proof}[Termination]
	The consensus algorithm starts when a correct node creates an interesting event. Once that
	happens, by lemma \ref{interesting}, all correct nodes will eventually create interesting
	events, and this in turn implies that all of them will create observers. Once there is a
	supermajority of observers, we start the binary agreement algorithm, which will terminate by
	Theorem \ref{binconsensus}. After binary agreement terminates, because the set of voters for
	the next block is non-empty (by Lemma \ref{nonempty}), the next block is already determined -
	so the agreement about the next block also terminates.
\end{proof}

\section{Conclusions}

A new consensus algorithm has been presented, building upon some previous achievements in this
field (\cite{baird2016swirlds}, \cite{aba}), but combining their features in a novel way. It works under
asynchronous conditions, uses a gossip graph (like \cite{gkagol2018aleph}, \cite{baird2016swirlds} and \cite{rocket2018snowflake}),
and a common coin. It is also leaderless (barring the very initialisation of the network, when a
trusted dealer may be required) and an open source implementation is provided. We believe this
approach will be useful in numerous applications, one of which is the SAFE Network.

\appendix

\titlelabel{\thesection\par}

\section{: Extending the algorithm to a network with dynamic membership}

The main algorithm is formulated in terms of a network in which all the members are known a priori
(a static network). This is enough in some settings, but sometimes it is necessary to allow
the set of members to be modified, so that members of the network can join and leave at will.

In order to accommodate dynamic membership in the network, every node has to keep a record of who the
current members are. We will call this record the \emph{network members list}. This list is
initialised with the so called \emph{genesis group} and can only be modified as a result of a block
becoming stable.

Changing the membership list requires re-generating the common coin secret key shares, too. We need
to generate them in a way that doesn't allow any single node to get to know more than just their
share. Fortunately, there are distributed key generation (DKG) algorithms in existence that solve
this problem. One of the simplest ones is \cite{dkg}, but it requires synchronous communications.

Fortunately, we can simulate synchronous communications using the instances of PARSEC held by the
old set of members. DKG messages can be input as votes into the graph, and the consensus algorithm
will ensure that all nodes will process them in the same order. The messages in the order that was
agreed upon can also be passed to the nodes that are joining, thus allowing them to generate their
key shares.

In summary, a membership change would be processed as follows:

\begin{enumerate}
	\item The old set of members votes for a membership change (adding or removing a node).
	\item A block with node addition/removal becomes stable.
	\item Nodes from the old members set start the DKG algorithm and begin voting for DKG messages.
	\item Blocks with DKG messages become stable. At some point, enough of them are stable to
		complete the DKG algorithm.
	\item Nodes from the old set calculate their new key shares and update their members lists.
	\item If a new node was joining, it will start receiving gossip containing the gossip events
		since genesis up to this point. By processing the graph, it can learn of all the blocks that
		became stable, including the DKG messages, from which it will be able to derive its own key
		share. It will also arrive at the current members list.
	\item The membership change is complete.
\end{enumerate}

This method ensures that every meta-election uses a constant list of members from start to finish - the members list only gets modified once a meta-election finishes (the one regarding the block that completed the DKG), and another one is not yet started. Thanks to this approach, the proofs of correctness apply without modifications.

\label{references}

\bibliographystyle{plain}
\bibliography{whitepaper}

\begin{thebibliography}{10}

\bibitem{dkg}
Fackler A.
\newblock {\em Distributed Key Generation}.
\newblock
  https://github.com/poanetwork/hbbft/issues/47\#issuecomment-394422248.

\bibitem{miller}
Miller A.
\newblock {\em Optimistic Randomness for ABA}.
\newblock https://github.com/amiller/HoneyBadgerBFT/issues/63.

\bibitem{alpern1985defining}
Bowen Alpern and Fred~B Schneider.
\newblock Defining liveness.
\newblock 21(4):181--185.

\bibitem{apt2017computational}
Krzysztof~R Apt, Eryk Kopczynski, and Dominik Wojtczak.
\newblock On the computational complexity of gossip protocols.
\newblock In {\em IJCAI}, pages 765--771.

\bibitem{baird2016swirlds}
Leemon Baird.
\newblock The swirlds hashgraph consensus algorithm: Fair, fast, byzantine
  fault tolerance.

\bibitem{birman2007promise}
Ken Birman.
\newblock The promise, and limitations, of gossip protocols.
\newblock 41(5):8--13.

\bibitem{bls}
Dan Boneh, Ben Lynn, and Hovav Shacham.
\newblock Short signatures from the weil pairing.
\newblock 17(4):297--319.

\bibitem{borowsky1993generalized}
Elizabeth Borowsky and Eli Gafni.
\newblock Generalized flp impossibility result for t-resilient asynchronous
  computations.
\newblock In {\em Proceedings of the twenty-fifth annual ACM symposium on
  Theory of computing}, pages 91--100. ACM.

\bibitem{chuat2015efficient}
Laurent Chuat, Pawel Szalachowski, Adrian Perrig, Ben Laurie, and Eran Messeri.
\newblock Efficient gossip protocols for verifying the consistency of
  certificate logs.
\newblock In {\em 2015 IEEE Conference on Communications and Network Security
  (CNS)}, pages 415--423. IEEE.

\bibitem{fischer1982impossibility}
Michael~J Fischer, Nancy~A Lynch, and Michael~S Paterson.
\newblock Impossibility of distributed consensus with one faulty process.

\bibitem{guerraoui2010lifting}
Rachid Guerraoui, K{é}vin Huguenin, Anne-Marie Kermarrec, Maxime Monod, and
  Swagatika Prusty.
\newblock Lifting: lightweight freerider-tracking in gossip.
\newblock In {\em Proceedings of the ACM/IFIP/USENIX 11th International
  Conference on Middleware}, pages 313--333. Springer-Verlag.

\bibitem{gkagol2018aleph}
Adam G{ą}gol and Micha{ł} {Ś}wi{ę}tek.
\newblock Aleph: A leaderless, asynchronous, byzantine fault tolerant consensus
  protocol.

\bibitem{parsec}
MaidSafe.
\newblock {\em PARSEC reference implementation (WIP)}.
\newblock https://github.com/maidsafe/parsec.

\bibitem{miller2016honey}
Andrew Miller, Yu~Xia, Kyle Croman, Elaine Shi, and Dawn Song.
\newblock The honey badger of bft protocols.
\newblock In {\em Proceedings of the 2016 ACM SIGSAC Conference on Computer and
  Communications Security}, pages 31--42. ACM.

\bibitem{aba}
Achour Mostefaoui, Hamouma Moumen, and Michel Raynal.
\newblock Signature-free asynchronous byzantine consensus with t< n/3 and o (n
  2) messages.
\newblock In {\em Proceedings of the 2014 ACM symposium on Principles of
  distributed computing}, pages 2--9. ACM.

\bibitem{penland2016towards}
A~Penland, David Shrier, T~Hardjono, and I~Wladawsky-Berger.
\newblock Towards an internet of trusted data: A new framework for identity and
  data sharing.

\bibitem{rocket2018snowflake}
Team Rocket.
\newblock Snowflake to avalanche: A novel metastable consensus protocol family
  for cryptocurrencies.

\bibitem{trivial}
Micali S.
\newblock {\em Byzantine Agreement, Made Trivial}.

\end{thebibliography}

\end{document}